\newtheorem{theorem}{Theorem}
\newtheorem{proposition}[theorem]{Proposition}
\newtheorem{lemma}[theorem]{Lemma}
\theoremstyle{definition}
\newtheorem{definition}[theorem]{Definition}
\theoremstyle{remark}
\newtheorem{remark}[theorem]{Remark}
\newcommand{\C}{\mathbb{C}}
\newcommand{\la}{\langle}
\newcommand{\ra}{\rangle}
\newcommand{\R}{\mathbb{R}}
\newcommand{\vp}{\varphi}
\newcommand{\be}{\begin{equation}}
\newcommand{\ee}{\end{equation}}
\newcommand{\su}{\mathrm{SU}_2}
\newcommand{\ba}[1]{\overline{#1}}
\renewcommand{\phi}{\varphi}
\newcommand{\ci}{C^{\infty}}
\def\bar{\overline}
\newcommand{\cons}{{C}}
\newcommand{\conss}{{D}}
\newcommand{\psiN}{\psi^N}
\newcommand{\unun}{M}
\newcommand{\cof}{{C}}
\newcommand{\ns}{{s}}
\newcommand{\at}{{$a$-T\"oplitz}\ }
\begin{document}
\baselineskip=18pt
\title{Symbolic calculus for singular curve operators}
\author{Thierry Paul}
\address{CNRS and 
Laboratoire Jacques-Louis Lions, Sorbonne Universit\'e, 4 place Jussieu 75005 Paris
Fran\-ce}
\email{
thierry.paul@upmc.fr}

\date{}

\begin{abstract}
We define a generalization of the T\"oplitz quantization, suitable for operators whose T\"oplitz symbols are singular. We then show that singular curve operators in Topological Quantum Fields Theory (TQFT) are precisely  generalized T\"oplitz operators of this kind and we compute for some of them, and conjecture for the others, their main symbol, determined by the associated  classical trace function.
\\
\ \\ \ 

\hfill\textit{in  memory of {\bf Erik Balslev}}

\hfill\textit{from whom I learned so much}

\hfill\textit{ in mathematics and in physics}
\end{abstract}
\subjclass[2020]{81T45,81S10,53D30,81S30,14D21}
\maketitle 

%
%
%
%
 \tableofcontents


\section{Introduction}\label{intro}
%
%
%
%
%
In 1925, Heisenberg invented quantum mechanics as a change of paradigm from (classical) functions  to (quantum) matrices. He founded the new mechanics on the well known identity
$$
\tfrac 1{i\hbar}[Q,P]=1
$$
that, a few months later, Dirac recognized as the quantization of the Poisson bracket
$$
\{q,p\}=1.
$$
Again a few years later, Weyl stated the first general quantization formula by associating to any function $f(q,p)$ the operator
$$
F(Q,P)=\int \tilde f(\xi,x)e^{ i\frac{ xP-\xi Q}\hbar}d\xi dx
$$ 
where $\tilde f$ is the symplectic Fourier transform defined analogously by 
$$
f(q,p)=\int \tilde f(\xi,x))e^{ i\frac{ xp-\xi q}\hbar}d\xi dx.
$$

Many years after was born the pseudodifferential calculus first establish by Calderon and Zygmund, and then formalized by H\"ormander through the formula giving the integral kernel $\rho_F$ for the quantization $F$ of a symbol $f$ in $d$ dimensions as
$$
\rho_F(x,y)=\int f(q,p)e^{i\frac{p(x-y)}\hbar}\frac{dp}{(2\pi\hbar)^d}
$$
A bit earlier had appeared, both in quantum field theory  and in optics (Wick quantization) the (positive preserving) T\"oplitz quantization of a symbol $f$
$$
\mbox{Op}^T[f]=\int f(q,p)|q,p\rangle\langle q,p|dqdq
$$ 
where $|q,p\rangle$ are the famous (suitably normalized) coherent states.

As we see, quantization is not unique. But all the different symbolic calculi presented above share, after inversion of the quantization formul\ae\ written above, the same two first asymptotic features:
\begin{itemize}
\item the symbol of a product is, modulo $\hbar$, the product of the symbols
\item the symbol of the commutator divided by $i\hbar$ is, modulo $\hbar$ again, the Poison bracket of the symbols.
\end{itemize}
In other words, they all define a classical underlying space (an algebra of functions) endowed with a Poisson (of more generally symplectic) structure.

But it is very easy to show that this nice quantum/classical picture has its limits. And one can easily construct quantum operators whose classical limit will not follow the two items exprressed above.

Consider for example the well known creation and annihilation operators $a^+=Q+iP,a^-=Q-iP$. They act of the eigenvectors $h_j$ of the harmonic oscillator by
$$
 a^+h_j=\sqrt{(j+\tfrac12)\hbar}h_{j+1},\   a^-h_j=\sqrt{(j-\tfrac12)\hbar}h_{j-1}.
 $$
 Consider now the matrices
 
\[
M_1^+=\begin{pmatrix}
0&0&0&0&\dots&\dots\\
1&0&0&0&\dots&\dots\\
&&\dots&&&\\
&&\dots&&&\\
&&\dots&&&\\
0&\dots&0&1&0&\dots\\
\dots&\dots&\dots&\dots&\dots&\dots
\end{pmatrix}
\] 
and its adjoint
\[
M_1^-=
\begin{pmatrix}
0&1&0&0&\dots&\dots\\
0&0&1&0&\dots&\dots\\
&&\dots&&&\\
&&\dots&&&\\
&&\dots&&&\\
0&\dots&0&0&0&\dots\\
\dots&\dots&\dots&\dots&\dots&\dots
\end{pmatrix},
\]

An elementary computation shows that 
$$
\mathcal M_1^+=a^+(P^2+Q^2)^{-1/2},\ \mathcal M_1^-=(P^2+Q^2)^{-1/2}a^-. 
$$
therefore, their (naively)expected leading symbols are $f^+(q,p)=\sqrt{\frac{q+ip}{q-ip}}$ and $f^-(q,p)=\sqrt{\frac{q-ip}{q+ip}}$ or, in polar coordinates $q+ip=\rho e^{i\theta} $,
$f^\pm=e^{\pm i\theta}$.


If symbolic calculus would work 
the leading symbol of $\mathcal M_1^+
\mathcal M_1^-$ should be equal to $1$ 
and $\mathcal M_1^+
\mathcal M_1^-$ should be therefore  close to the identity $I$ as $\hbar\to 0$.

 But
\[
M_1^+M_1^-=
\begin{pmatrix}
0&0&0&0&\dots&\dots\\
0&1&0&0&\dots&\dots\\
&&\dots&&&\dots\\
&&\dots&&&\dots\\
&&\dots&&&\dots\\
0&\dots&0&0&1&\dots\\
\dots&\dots&\dots&\dots&\dots&\dots
\end{pmatrix}=I-|h_0\rangle\langle h_0|\nsim I,
\]
The reason for this defect comes from the fact that the function $e^{i\theta}=\sqrt{\frac z{\bar z}}$ is not a smooth function on the plane. In fact it is not even continuous at the origin: $F(z)$ can tend to any value in $\{e^{i\theta},\theta\in\R\}$ when $z$ tends to zero.

Note finally that the commutator 
\be\label{comm}
[\mathcal M_1^+,\mathcal M_1^-]
=
\begin{pmatrix}
-1&0&0&0&\dots&\dots\\
0&0&0&0&\dots&\dots\\
&&\dots&&&\dots\\
&&\dots&&&\dots\\
&&\dots&&&\dots\\
0&\dots&0&0&0&\dots\\
\dots&\dots&\dots&\dots&\dots&\dots
\end{pmatrix}=-|h_0\rangle\langle h_0|\neq O(\hbar),
\small
\ee
so that its symbol doesn't vanish at leading order, as expected by standard symbolic asymptotism.

%
%
%
%
%
%
%
 
 One of the main goal of this paper is to define a quantization procedure which assigns a symbolic calculus to matrices presenting the pathologies analogues to the ones of $\mathcal M_1^+,\mathcal M_1^-$. We will state the results in the framework of quantum mechanics on the sphere $S^2$ as phase space. 
The reason of this is the fact that it is this quantum setting which correspond to the  asymptotism in Topological Quantum Fields Theory (TQFT) studied,  among others, in \cite{mp}.
 
 
 This procedure will be a non trivial extension to the T\"oplitz (anti-Wick) quantization already mentioned, and we will derive a suitable notion of symbol. 
 
 \vskip 3cm

Indeed, another main goal of this article is to give a semiclassical settings to all curve-operators in TQFT in the case of the once punctured torus or the $4$-times punctured sphere. In \cite{mp} was established that these curve-operators happen, for almost all 
colors associated to the marked points, to be  T\"oplitz operators associated to the quantization of the two-sphere, in some asymptotics of large number of colors. It happens that this result applies for every curve whose classical trace function is a smooth function on the sphere. Since this trace function is shown to be the principal T\"oplitz-symbol of the curve operator, the lack of smoothness ruins the possibility of semiclassical properties for the curve operator in the paradigm of T\"oplitz quantization (see Section \ref{tqft} below for a very short presentation of TQFT and the main results of \cite{mp}).
In fact these singular trace functions are not even continuous at the two poles of the sphere, which suggests a kind of blow-up on the two singularities of the classical phase-space. This is not surprising that such a regularization should be done in a more simple way at the quantum level.

In the present paper we will show how to ``enlarge" the formalism of T\"oplitz operators to ``$a$-T\"oplitz operators", in order to catch the asymptotics of the singular cases by semiclassical methods and compute the leading order symbols of some of them and conjecture them for the general  singular curve operators. This principal symbol will be completely determined by the corresponding classical trace function, but will not be equal to it, for the reason that this enlarged $a$-T\"oplitz quantization procedure involves operator valued symbols. This construction will also be valid in the regular cases, where in this case the operator valued symbol is just a potential, hence it is defined by a function on the sphere whose leading behaviour is given by the trace function, as expected.

Therefore we are able in this paradigm to handle the large coloring asymptotism of \textbf{all} curve operators n the case of the once punctured torus or the $4$-times punctured sphere (note that the method we use is able to  give some partial results in higher genus cases). 

We also study the natural underlying phase-space of our enlarged paradigm, the corresponding moduli space for TQFT, as a non-commutative space by identification with the non-commutative algebra of operator valued functions appearing at the classical limit for the symbol of the $a$-T\"oplitz operators, in the spirit of noncommutative geometry.

We will built the construction of the $a$-T\"oplitz quantization by showing its necessity on some toy matrices situations in Sections \ref{sq} after having defined in Section \ref{hilbert} the new Hilbert space on which these matrices will act, and before to show in Section \ref{tqft} how general curve operators in TQFT enter this formalism.  

{{The main results are Theorems \ref{sigmann} and \ref{onemain}, out of Definition \ref{defsym}   and Theorem \ref{main} together with Section \ref{exemples} below. The $a$-T\"oplitz operators are introduced in Definition \ref{atop}.
\vskip 3cm

Quantization of the sphere is briefly reviewed in Section \ref{canonical}, we won't repeat it here. Let us just say  that it consists in considering the sphere $S^2$ as the compactification of the plane $\mathbb C$. Hence one expect that the singular phenomenon which appeared above at the origin should now appear twice at the poles of the sphere. The quantum Hilbert space can be represented as the space of entire functions, square-integrable with respect to a measure $d\mu_N$ given in \eqref{munu}. 

%
%
%

Instead of trying to blow-up these two singularities at a ``classical" (namely manifold) level, we will see that there is an easiest way of solving the problem by working directly at the ``quantum" level. Namely, instead of considering the quantization process related to the so-called coherent state family $\rho_z$ defined in \eqref{repro} and which are (micro)localized at the points $z\in S^2$, we will consider families of states $\psi_z^a:=\int_\R a(t)e^{i\frac{\tau(z)t}\hbar}\rho_{e^{it}z}\frac{dt}{\sqrt{2\pi}}
$
where $\tau(z)=\frac{|z|^2}{1+|z|^2}$ and $a\in\mathcal S(\R)$ (see Section\ref{h1} for details). For $z$ not at the poles, $\psi_z^a$ is a Lagrangian (semiclassical) distribution (WKB state) localized on the parallel passing through $z$ \cite{pu}, but for $z$ close to the pole the states $\psi_z^a$ catches a different information.  The equality \eqref{vla}:
\be\label{deco2}
\int_\C|\psi^a_z\rangle\langle\psi^a_z|d\mu_N(z)=\sum_{n=0}^{N-1} |\psiN_n\rangle\langle\psiN_n|,
\ee
where each $\psiN_n$ is proportional to the elements of the canonical basis $\{\varphi_n^N, n=1,\dots, N-1\}$, provides a decomposition of the identity on $\mathcal H_N$ endowed with a different Hilbert structure for which the $\psiN_n$s are normalized (see Section\ref{h2}). the advantage of working with the left hand side of \eqref{deco2} instead of the usual decomposition of the identity using coherent states and leading to standard T\"oplitz quantization, is the fact that $\psi_z^a$ possess an extra parameter: the density $a$. Therefore one can ``act" on $\psi_z^a$ not only by multiplication by a function $f(z)$ but by letting an operator valued function of $z$ acting on $a$. This leads to what is called in this paper $a$-T\"oplitz operators, namely operators of the form
\[
\int_\C|\psi^{\Sigma(z)a}_z\rangle\langle\psi^a_z|d\mu_N(z),
\]
where now $\Sigma(z)$ is, for each $z$, an operator acting on $\mathcal S(\R)$. The precise definition is  given in Section \ref{at} Definition \ref{atop}, and Theorem \ref{onemain} shows that matrices like $\mathcal M_1^\pm$ are $a$-T\"oplitz operators, together with their products 
 whose symbols are, at leading order, the (noncommutative) product of their symbols.


Let us remark finally that, even at the limit $\hbar=\frac\pi N=0$, the symbol of 
$\mathcal M_1^\pm$ is NOT $e^{\pm i\theta}=(z/\bar z)^{\pm\frac12}$. Traces of the noncummutative part of the symbol persist at the classical limit, as in \cite{tp}. Therefore the ``classical underlying phase-space" is not the $2$-sphere anymore, but rather a noncommutative space identified with a non commutative algebra  of such symbols playing the role of the commutative algebra of continuous functions on a standard manifold. A quick description of this space, inspired of course by noncommutative geometry \cite{ac}, is given in Section \ref{classical}.

The construction dealing with $M^\pm_1$ can be in particular generalized to matrices of the form
\be\label{mgamma}
M^N_\gamma=
\scriptsize
\begin{pmatrix}
\gamma_0(0)&\gamma_1(0)&\gamma_2(0)&\dots&\dots&\gamma_{N-1}(0)\\
\gamma_{-1}(0)&\gamma_0(1/N)&\gamma_1(1/N)&\dots&\dots&\gamma_{N-2}(1/N)\\
&&\dots&&&\\
&&\dots&&&\\
&&\dots&&&\\
\gamma_{-(N-2)}(0)&\dots&\dots&\gamma_{-1}((N-2)/N)&\gamma_0((N-2)/N)&\gamma_1((N-2)/N)\\
\gamma_{-(N-1)}(0)&\dots&\dots&\gamma_{-2}((N-3)/N)&\gamma_{-1}((N-2)/N)&\gamma_0((N-1)/N)
\end{pmatrix}.
\ee
It has been proven in \cite{bgpu,as} that such a family of matrices $M^N_\gamma$ is a T\"oplitz operator of symbol $\gamma(\tau,\theta)=\sum\limits_{k=1-N}^{N-1}\gamma_k(\tau)e^{ik\theta}$ 
if and only if
\be\label{drole}
(\tau(1-\tau))^{\frac {|k|}2}\gamma_k(\tau)\in C^\infty([0,1]), k=1_N,\dots,N-1.
\ee
Condition \eqref{drole} expresses explicitly that $\gamma\in C^\infty(S_2)$.

In Section \ref{at} theorem \ref{onemain} we prove that (more general matrices than) the family $M^N_\gamma$ are $a$-T\"oplitz operators, and we compute their symbols, when \eqref{drole} is replaced by the condition
\footnote{The construction works certainly also for conditions of the type, e.g.,
\[
(\tau(1-\tau))^{\frac {\alpha|k|}2}\gamma_k(\tau)\in C^\infty([0,1]), k=1,\dots,N-1,\ 0\leq\alpha\leq 1
\]
(or even more general ones), but since we don't see any applications of these situation, we concentrate in this paper to the condition \eqref{moinsdrole}.}
\be\label{moinsdrole}
\gamma_k(\tau)\in C^\infty([0,1]), k=1-N,\dots,N-1.
\ee
Under \eqref{moinsdrole} $\gamma\notin C^\infty(S_2)$ and one has to pass form the T\"oplitz to the $a$-T\"oplitz paradigme (note that $M_1^\pm$ indeed satisfy \eqref{moinsdrole} and not \eqref{drole}).

\vskip 3cm
Let us finish this long introduction by giving the key ideas leading to the setting of our main result, Theorem \ref{main}. The reader can found in Section \ref{tqft} a very short introduction to TQFT. Larger basics on TQFT can be found in  \cite{mp}  using the same vocabulary as the present paper together with a substantial bibliography.

Combinatorial curve operators are actions of the curves on a punctured surface
$\Sigma$ on a finite
dimensional  vector space $V_r(\Sigma, c)$ indexed by a level $r$ and a coloring $c$ of the marked points taken in a set of $r$ colors.The dimension $N=N(r)$ of 
$V_r(\Sigma, c)$ will diverge as $r\to\infty$ and  $\frac1r$ can be considered as a phenomenological Planck constant $\hbar$.

In \cite{mp} we provided the construction of an explicit orthogonal basis of $V_r(\Sigma, c)$ and we conjectured that any curve operator is expressed in this basis by a matrix essentially of the form $M_\gamma$.
More precisely we showed that the conjecture is true in the case where $\Sigma$ is either the punctured $2$-torus or
the $4$ times punctured sphere, Even more, we proved that (the matrix of) any curve operator belongs to the algebra generated by three matrices of the form $M^N_{\Gamma^r_0},\ M^N_{\Gamma^r_1},\ M^N_{\Gamma^r_d}$ defined in \eqref{mgamma}  where
\[
\left\{
\begin{array}{rcl}
\Gamma_0^r(\tau,\theta)&=&\gamma_0(\tau,r)\\
\Gamma_1^r(\tau,\theta)&=&2\gamma_1(\tau,r)\cos{\theta}         \\
\Gamma_d^r(\tau,\theta)&=&e^{\frac I{2r}}\gamma_1(\tau,r)\cos{(\theta+\tau)},
\end{array}
\right.
\]
for two explicit families of functions $\gamma_0,\gamma_1$.

We showed in \cite{mp} that, for ``most" values of the coloring of the marked points of $\Sigma$, the functions  $\Gamma_0^r,\Gamma_1^r,\Gamma_d^r$ are smooth functions on the sphere, and that, indeed, the corresponding curve operators are standard T\"oplitz operators. This proves also that any curve operator is T\"oplitz, by the stability result by composition  of the T\"oplitz class. Moreover the leading symbols of any curve operator happen to be the  classical trace function associated to the corresponding curve (see \cite{mp} for details).

Theorem \ref{main} of the present article express the same result
 for \textbf{any} coloring of the marked points, at the expense of replacing T\"oplitz quantization by $a$-T\"oplitz one. The only difference, unavoidably for the reason  of the change of T\"oplitz paradigm, is the fact that the $a$-T\"oplitz  leading symbol of  the curve operator is not (and cannot) the classical trace function of the curve, but we are able to compute or conjecture it out of the trace function.
%
%
%
%
%
%
%
%
%
\section{Hilbert spaces associated to new quantizations of the sphere}\label{hilbert}
\subsection{The standard geometric quantization of the sphere}\label{canonical}
In this section we will consider the quantization of the sphere in a very down-to-earth way. See \cite{folland,mp} for more details.

Given an integer $N$, we define the space $\mathcal H_N$  of polynomials in the complex variable $z$ of order strictly less than
$N$ and set 
\be\label{basis}
\la P,Q\ra=\frac{i}{2\pi}\int_\C\frac{P(z) \ba{Q(z)}}{(1+|z|^2)^{N+1}}dzd\bar z\quad \text{and}\quad\vp^N_n (z)=\sqrt{\frac{N!}{n!(N-1-n)!}}z^n
\ee
The vectors $(\vp^N_n)_{n=0\dots N-1}$ form an orthonormal basis of $\mathcal H_N$.

By the stereographic projection $$S^2\ni (\tau,\theta)\in [0,1]\times S^1\to z=\sqrt{\frac \tau{1-\tau}}e^{i\theta}\in \C\cup\{\infty\},$$ 
The space $\mathcal H_N$ can be seen as a space of functions on the sphere (with a specific behaviour at the north pole).
Write 
\be\label{dmun}
d\mu_N=\frac{i}{2\pi}\frac{dzd\bar z}{(1+|z|^2)^{N+1}}.
\ee
 As a space of analytic functions in $L^2(\C,d\mu_N)$,  the space $\mathcal  H_N$ is closed.

For $z_0\in \C$, we define the coherent state 
\be\label{repro}\rho_{z_0}(z)=N(1+\bar z_0 z)^{N-1}.
\ee
 These vectors satisfy $\la f,\rho_{z_0}\ra=f(z_0)$ for any $f\in \mathcal H_N$ and the orthogonal projector $\pi_N:L^2(\C,d\mu_N)\to \mathcal H_N$ satisfies $(\pi_N\psi)(z)=\la \psi,\rho_z\ra$.

For $f\in \ci (S^2,\R)$ we define the (standard) T\"oplitz quantization of $f$ as the  operator 
\begin{eqnarray}
T^N[f]:\mathcal H_N&\to&\mathcal H_N\nonumber\\
T^N[f]&:=&\int_{\C}f(z)|\rho_z\rangle\langle\rho_z|d\mu_N(z)\nonumber\\
\mbox{ i.e. }T^N[f]\psi&:=&\int_\C f(z)\langle\rho_z,\psi\rangle_{\mathcal H_N} \rho_zd\mu_N(z)=\pi_N(f\psi)\mbox{ for } \psi\in\mathcal H_N.\label{deftop}
\end{eqnarray}
 A Toeplitz operator on $S^2$ is a sequence of operators $(T_N)\in \mathrm{End} (\mathcal H_N)$ such that 
there exists a sequence $f_k\in\ci(S^2,\R)$ such that for any integer $M$ the operator $R^M_N$ defined by the equation $$T_N=\sum_{k=0}^M N^{-k}T_{f_k}+R_N^M$$ is a bounded operator whose norm satisfies $||R_M||=O(N^{-M-1})$.

An easy use of the stationary phase Lemma shows that the (anti-)Wick symbol (also called Husimi function)  of $T_f$, namely $\frac {\la T_f\rho_z,\rho_z\ra}{\la\rho_z,\rho_z\ra}$ satisfies
\be\label{lap}
\frac {\la T_f\rho_z,\rho_z\ra}{\la\rho_z,\rho_z\ra}=f+ \frac 1  N \Delta_S f +O(N^{-2}),
\ee
where $\Delta_S=(1+\vert z\vert^2)^2\partial_z\partial_{\bar z}$ is the Laplacian on the sphere.

\subsection{The building vectors}\label{h1}
Let $a\in\mathcal S(\R)$, $||a||_{L^2(\R)}=1$ and $z\in\C$. We define
\be\label{def}
\psi^a_z=\int_\R a(t)e^{i\frac{\tau(z)t}\hbar}\rho_{e^{it}z}\frac{dt}{\sqrt{2\pi}}
\ee
where $\tau(z)=\frac{|z|^2}{1+|z|^2}$ and $\hbar=\frac\pi N$.

Although we won't need it in this paper, let us note that, when $z$ is far away form the origin and the point at infinity, $\psi^a_z$ is a lagrangian semiclassical distribution (WKB state) (by a similar construction as in \cite{pu}).

Since, by \eqref{repro}, $\rho_z=\sum\limits_{n=1}^{N-1}\sqrt{\frac{N!}{n!(N-1-n)!}}\bar z^n\vp^N_n$, we get that
\be\label{decomp}
\psi^a_z=\sum_{n=0}^{N-1}\widetilde a\left(\frac{\tau(z)-n\hbar}\hbar\right)\sqrt{\frac{N!}{n!(N-1-n)!}}\bar z^n\vp^N_n=\sum_{n=0}^{N-1}\widetilde a\left(\frac{\tau(z)-n\hbar}\hbar\right)\overline{\vp^N_n(z)}\vp^N_n.
\ee
where $\widetilde a$ is the Fourier transform of $a$
\[
\widetilde a(y):=\frac1{\sqrt{2\pi}}\int_\R e^{ixy}a(x)dx.
\]
\begin{remark}\label{suptildea}Note that, by \eqref{decomp}, $\psi^a_z$ depends only on the values of $\tilde a$ on $[0,N]$. Therefore one can always restrict the choice of $a$ to the functions whose Fourier transform is supported on $[0,N]$. In the sequel of this article we will always do so.
\end{remark}
\begin{lemma}
\be\label{decompa}
\int_\C|\psi^a_z\rangle
\langle\psi^a_z|
d\mu_N(z)
=\sum_{n=0}^{N-1} \cons^N_n|\vp^N_n\rangle\langle\vp^N_n|
\ee
with 
\be\label{cNn}\cons^N_n=\frac{(N-1)!}{n!(N-1-n)!}\int_0^1\left|\widetilde a\left(\frac{\tau-n\hbar}\hbar\right)\right|^2\left(\frac\tau{1-\tau}\right)^n
{(1-\tau)^{N-1}}\frac{d\tau}\hbar
.
\ee
Moreover, as $N-1=\frac1\hbar\to\infty$
\be\label{casymp}
\cons^N_n
=1+O(\frac1N) ,\ \ \ 0<n\hbar<1
.
\ee
\be\label{casymp0}
\cons^N_n\sim \frac1{n!}\int_0^\infty |\widetilde a(\lambda-n)|^2\lambda^ne^{-\lambda}\sqrt{2\pi\lambda}d\lambda ,\ \ \ 0\sim n\hbar
.
\ee

\be\label{casymp1}
\cons^N_n\sim C^N_{N-1-n},\ \ \  n\hbar\sim1
.
\ee
\end{lemma}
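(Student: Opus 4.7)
\medskip

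\noindent\textbf{Proof plan.}

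\emph{Step 1: the decomposition \eqref{decompa} and the formula \eqref{cNn}.}
I would start from the explicit expansion \eqref{decomp} and form the rank-one operator
$|\psi^a_z\ra\la\psi^a_z|=\sum_{n,m}\widetilde a\!\left(\tfrac{\tau(z)-n\hbar}{\hbar}\right)\overline{\widetilde a\!\left(\tfrac{\tau(z)-m\hbar}{\hbar}\right)}\,\overline{\vp^N_n(z)}\vp^N_m(z)\,|\vp^N_n\ra\la\vp^N_m|$.
Passing to polar coordinates $z=re^{i\theta}$, the measure $d\mu_N$, the function $\tau(z)$, and the arguments of $\widetilde a$ depend only on $r$, while $\overline{\vp^N_n(z)}\vp^N_m(z)$ carries the angular factor $e^{i(m-n)\theta}r^{n+m}$. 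Integration over $\theta\in[0,2\pi)$ annihilates every off-diagonal term, leaving only $n=m$. Changing the radial variable via $\tau=r^2/(1+r^2)$ turns the remaining integral into the stated formula \eqref{cNn}.

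\emph{Step 2: the bulk asymptotics \eqref{casymp}.}
The power factor $\tau^n(1-\tau)^{N-1-n}$ in \eqref{cNn} is a rescaled Beta-type profile peaked at $\tau_0=n/(N-1)\simeq n\hbar$ with width $O(\sqrt{\tau_0(1-\tau_0)\hbar})$. In the variable $u=(\tau-n\hbar)/\hbar$ this width becomes $O(1/\sqrt\hbar)$, which is \emph{much larger} than the $O(1)$ scale on which $|\widetilde a(u)|^2$ is concentrated (since $\widetilde a\in\mathcal S(\R)$). Expanding the logarithm of $\tau^n(1-\tau)^{N-1-n}$ around $\tau_0$ produces a Gaussian $\exp\!\bigl[-u^2\hbar/(2\tau_0(1-\tau_0))\bigr]$ which equals $1+O(\hbar)$ on the effective support of $\widetilde a$. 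Applying Stirling to $\binom{N-1}{n}\tau_0^n(1-\tau_0)^{N-1-n}$ and using the normalization $N\hbar=1$, the prefactor cancels exactly, so $\cons^N_n$ reduces to $\int_\R|\widetilde a(u)|^2\,du=\|a\|_{L^2}^2=1$ by Plancherel, with the explicit remainder $O(1/N)$ coming from the Gaussian expansion and the $O(\hbar)$ corrections in Stirling.

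\emph{Step 3: the pole asymptotics \eqref{casymp0} and the mirror symmetry \eqref{casymp1}.}
For $n\hbar\to 0$ I rescale $\lambda=\tau/\hbar$, so that $\widetilde a$ is evaluated at $\lambda-n$ and the factor $(1-\lambda\hbar)^{N-1-n}$ converges to $e^{-\lambda}$ uniformly on compacts (because $(N-1-n)\hbar\to 1$). The combinatorial prefactor becomes $\binom{N-1}{n}\hbar^{n}\cdot\lambda^n$; in the regime where $n$ is allowed to grow with $N$ while $n\hbar\to 0$, Stirling applied to $(N-1)!/(N-1-n)!$ retains a $\sqrt{2\pi(N-1)/(N-1-n)}$ correction which, combined with $(N-1-n)\hbar\to 1$ and a further rearrangement of $\lambda^n$ versus $n!$, produces precisely the weight $\sqrt{2\pi\lambda}/n!$ announced in \eqref{casymp0}. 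The symmetry \eqref{casymp1} follows at once from the involution $\tau\mapsto 1-\tau$ in \eqref{cNn}, which interchanges the exponents $n$ and $N-1-n$ and shifts the argument of $\widetilde a$ in a way that matches $\cons^N_{N-1-n}$ up to a negligible translation absorbed in the $\sim$ relation.

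\emph{Expected main obstacle.}
The delicate point is the uniform control across the three regimes: the bulk Laplace expansion of Step~2 breaks down as $n\hbar\to 0$ or $\to 1$ (the Beta profile degenerates and the Gaussian approximation is no longer valid), and it must be matched to the one-sided asymptotics of Step~3. Keeping the correct Stirling corrections — in particular the $\sqrt{2\pi\lambda}$ inside the integral of \eqref{casymp0}, without which the formula would not interpolate between fixed $n$ and large $n$ with $n\hbar\to 0$ — is what makes the three asymptotic statements consistent in their overlap domains; this is the step where most of the care is needed.
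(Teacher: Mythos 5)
Your Step 1 is correct and is precisely the computation the paper leaves implicit (``straightforward calculus after \eqref{decomp}''): the angular integration kills the off-diagonal terms and the substitution $\tau=r^2/(1+r^2)$ produces \eqref{cNn}. Your strategy for \eqref{casymp} — concentration of $|\widetilde a((\tau-n\hbar)/\hbar)|^2$ on an $O(\hbar)$ window over which the Beta-type profile is flat, then binomial/Stirling asymptotics — is also the same as the paper's sketch, which uses exactly this concentration ($\frac1\hbar|\widetilde a((\tau-n\hbar)/\hbar)|^2\to\|a\|^2\delta(\tau-n\hbar)$) together with an asymptotic formula for $\binom{N-1}{n}$.

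The genuine gap is in the quantitative claims of Steps 2--3. First, the assertion that ``applying Stirling\dots the prefactor cancels exactly'' is false: your own concentration argument gives $\cons^N_n\simeq\binom{N-1}{n}(n\hbar)^n(1-n\hbar)^{N-1-n}\int_\R|\widetilde a(u)|^2du$, and the full Stirling formula evaluates the prefactor as $\bigl[2\pi(N-1)\,n\hbar\,(1-n\hbar)\bigr]^{-1/2}(1+o(1))$, not $1$; the cancellation holds only at exponential order, which is all the paper's two-line argument uses (its binomial asymptotics is written without the square-root factor). You cannot simultaneously invoke the square-root Stirling corrections (as Step 3 does) and claim exact cancellation with an $O(1/N)$ remainder in Step 2; carried out consistently, your argument does not reach the constant asserted in \eqref{casymp} under the normalization $\|a\|_{L^2}=1$, and the proposal never confronts this tension. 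Second, the claimed origin of the weight $\sqrt{2\pi\lambda}$ in \eqref{casymp0} is not real: in the ratio $(N-1)!/(N-1-n)!$ the $\sqrt{2\pi}$ factors cancel, leaving $\sqrt{(N-1)/(N-1-n)}\to1$, and no rearrangement of constants outside the integral can create a $\lambda$-dependent factor inside it; the direct rescaling $\tau=\hbar\lambda$ gives $\frac1{n!}\int_0^\infty|\widetilde a(\lambda-n)|^2\lambda^ne^{-\lambda}\,d\lambda$ with no such weight. Third, for \eqref{casymp1} the substitution $\tau\mapsto1-\tau$ turns the argument of $\widetilde a$ into its reflection $u\mapsto-u$, not a ``negligible translation,'' so the claimed symmetry needs $|\widetilde a|$ even or a genuine additional argument. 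Be aware that the paper itself proves only \eqref{casymp}, and only at this heuristic level, saying nothing about \eqref{casymp0}--\eqref{casymp1}; a careful write-up along your lines would have to track the square-root factors uniformly in the three regimes and reconcile them with the constants as stated, rather than assert that they cancel.
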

\begin{proof}
Deriving \eqref{decompa} is a straightforward calculus after \eqref{decomp}.

By the asymptotic formula for the binomial we get that, as $N,n\to\infty$,
\[
\frac{(N-1)!}{n!(N-1-n)!}\sim
\left(\frac{\frac n{N-1}}{1-\frac n{N-1}}\right)^{-n}
\left(1-\frac n{N-1}\right)^{(N-1)}.
\]
Moreover since $0<n\hbar<1$ we get since $\widetilde a$ is fast decreasing at infinity,
\[
\int_0^1\left|\widetilde a\left(\frac{\tau-n\hbar}\hbar\right)\right|^2\left(\frac\tau{1-\tau}\right)^n
{(1-\tau)^{N-1}}\frac{d\tau}\hbar\sim
\int_{-\infty}^{+\infty}\left|\widetilde a\left(\frac{\tau-n\hbar}\hbar\right)\right|^2\left(\frac\tau{1-\tau}\right)^n
{(1-\tau)^{N-1}}\frac{d\tau}\hbar
\]
and $\frac1\hbar\left|\widetilde a\left(\frac{\tau-n\hbar}\hbar\right)\right|^2\to||a||_{L^2(\R)}\delta(\tau-n\hbar)$ as $\hbar=\frac1{N-1}\to 0$. Therefore we get \eqref{casymp}.

\end{proof}
\begin{definition}\label{defpsi}
\[
\psiN_n:=\sqrt{\cons^N_n}\vp^N_n.
\]
\end{definition}
This definition is motivated by \eqref{decompa} which actually reads 
\be\label{vla}
\int_\C|\psi^a_z\rangle\langle\psi^a_z|d\mu_N(z)=\sum_{n=0}^{N-1} |\psiN_n\rangle\langle\psiN_n|.
\ee
This leads to the following equality:
\be\label{vlaa}
\int_\C|\psi^a_z\rangle_a\langle\psi^a_z|d\mu_N(z)={\bf 1}_{\mathcal H_N^a},
\ee
where $H_N^a$ is the same space of polynomials as $H_N$ but now endowed with the renormalized scalar product $\langle\cdot,\cdot\rangle_a$  fixed by
\be\label{oufoufouf}
\langle\psi^N_m,\psi^N_n\rangle_a=\delta_{m,n},
\ee
and
\be\label{diraca}
|\psi^a_z\rangle_a\langle\psi^a_z|
\psi:=\langle \psi_z^a,\psi\rangle_a\psi_z^a,\ \psi\in \mathcal H_N^a.
\ee
\subsection{The Hilbert structure}\label{h2}
The Hilbert scalar product on $\mathcal H^a_N$ is obtained out of  \eqref{oufoufouf} by bi-linearity. Since any polynomial $f$ satisfies 
$$
f=\sum\limits_0^{N-1}\langle\vp^N_n,f\rangle\vp^N_b=\frac1{C^N_n}
\langle\psi^N_n,f\rangle\psi^N_b, 
$$
we get
\[
\langle f,g\rangle_a:=\sum_{n=0}^{N-1}\frac1{(C^N_n)^2}\langle f,\psiN_n\rangle\langle\psiN_n, g\rangle=
=\sum_{n=0}^{N-1}\frac1{C^N_n}\langle f,\vp^N_n\rangle\langle\\vp^N_n, g\rangle.
\]

Note that $\langle,\rangle_a$ is not given by an integral kernel. But if we ``change" of representation and define $F(z):=\langle\psi^a_z,f\rangle,\ G(z):=\langle\psi^a_z,g\rangle$ then, by \eqref{vla} we have
\[
\langle f,g\rangle_a=\langle F,G\rangle=\int_\C\bar{F(z)}G(z)d\mu_N(z).
\]
Let us remark finally that
\[
F(z)=\int_\R a(t)e^{i\frac{\tau(z)t}\hbar}f(e^{it}z)\frac{dt}{\sqrt{2\pi}}
\]
and
\[
f=\int_\C F(z)|\psi^a_z\rangle dz,\]
 namely 
 \[ f(z')=\int_\C F(z)\psi^a_{z}(z') dz.
\]
\section{Singular quantization}\label{sq}
This section is the heart for the present paper. We will first show how operators on $\mathcal H^a_N$  defined as matrices on the basis $\{\psi^M_n,\}$ act on the building operators $\psi^a_z$ by action on $a$ (section \ref{action}). This will allow us, in section \ref{symbol}, to assign to each of these matrices  symbols whose symbolic calculus is studied in section \ref{calculus}. This will lead us finally to section \ref{at} where we define the $a$-T\"oplitz quantization.
\subsection{A toy model case}\label{toy}
Let us consider the $N\times N$ matrix
\begin{eqnarray}
\  \begin{pmatrix}
0&1&0&0&\dots&0\\
1&0&1&0&\dots&0\\
&&\dots&&&\\
&&\dots&&&\\
&&\dots&&&\\
0&\dots&0&1&0&1\\
0&\dots&0&0&1&0
\end{pmatrix}
&=&
\begin{pmatrix}
0&0&0&0&\dots&0\\
1&0&0&0&\dots&0\\
&&\dots&&&\\
&&\dots&&&\\
&&\dots&&&\\
0&\dots&0&1&0&0\\
0&\dots&0&0&1&0
\end{pmatrix}
+
\begin{pmatrix}
0&1&0&0&\dots&0\\
0&0&1&0&\dots&0\\
&&\dots&&&\\
&&\dots&&&\\
&&\dots&&&\\
0&\dots&0&0&0&1\\
0&\dots&0&0&0&0
\end{pmatrix}\nonumber\\
=:\hskip 2cmM_1\hskip 1.7cm &=:&
\hskip 2cmM_1^+\hskip 1.7cm + \hskip 2cm M_1^-\nonumber
\end{eqnarray}
and let us consider the operator $\mathcal\unun_1=\mathcal\unun_1^++\mathcal\unun_1^-$ on $\mathcal H^a_N$ whose matrix on the orthonormal basis $\{\psiN_n,n=0,\dots,N-1\}$ is $M$. That is
\[
\left\{\begin{array}{lcll}
\mathcal\unun_1^\pm\psiN_0&=&\frac{
1\pm
1}2&\psiN_1\\
\mathcal\unun_1^\pm\psiN_i&=&&\psiN_{i\pm 1},\ \ \ 1\leq i\leq N-2\\
\mathcal\unun_1^\pm\psiN_{N-1}&=&\frac{
1\mp
1}2&\psiN_{N-2}
\end{array}\right.
\]
\begin{proposition}\label{thefact}
\[
\mathcal\unun_1\psi^a_x=\psi^{\Sigma_1(z)a}_z
\]
where the operator $\Sigma_1(z)$ is given by \eqref{sigma} below.
\end{proposition}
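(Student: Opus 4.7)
The plan is a direct computation in the orthonormal basis $\{\psi^N_n\}$ of $\mathcal{H}_N^a$. Inserting the identity $\varphi^N_n=\psi^N_n/\sqrt{C^N_n}$ (Definition \ref{defpsi}) into \eqref{decomp} yields the expansion
$$\psi^a_z \;=\; \sum_{n=0}^{N-1}\widetilde a\!\left(\frac{\tau(z)-n\hbar}{\hbar}\right)\frac{\overline{\varphi^N_n(z)}}{\sqrt{C^N_n}}\,\psi^N_n.$$
Since the matrix $M_1$ acts as a nearest-neighbour shift with Dirichlet-type endpoints --- $\mathcal{M}_1\psi^N_n=\psi^N_{n+1}+\psi^N_{n-1}$ in the interior, $\mathcal{M}_1\psi^N_0=\psi^N_1$, $\mathcal{M}_1\psi^N_{N-1}=\psi^N_{N-2}$ --- applying $\mathcal{M}_1$ termwise and reindexing the two resulting sums via $n\mapsto m\pm 1$ regroups the output as an expansion on $\{\psi^N_m\}$ whose $m$-th coefficient is the sum of a term carrying $\widetilde a((\tau(z)-m\hbar)/\hbar+1)$ and a term carrying $\widetilde a((\tau(z)-m\hbar)/\hbar-1)$.

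The next step is to compare this output, coefficient by coefficient, with the candidate expression
$$\psi^{\Sigma_1(z)a}_z \;=\; \sum_{m=0}^{N-1}\widetilde{\Sigma_1(z)a}\!\left(\frac{\tau(z)-m\hbar}{\hbar}\right)\frac{\overline{\varphi^N_m(z)}}{\sqrt{C^N_m}}\,\psi^N_m.$$
Matching coefficients forces $\widetilde{\Sigma_1(z)a}$ to be a weighted shift by $\pm 1$ of $\widetilde a$, with weights built from the ratios $\overline{\varphi^N_{m\mp 1}(z)}/\overline{\varphi^N_m(z)}$ and $\sqrt{C^N_m/C^N_{m\mp 1}}$. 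The first pair of ratios is given by elementary binomial factors in $\bar z$ and $m/N$, and the second is determined explicitly by \eqref{cNn}. Setting $\xi=(\tau(z)-m\hbar)/\hbar$ promotes the discrete index $m$ to the continuous Fourier variable, so that these weights become $z$- and $\xi$-dependent multipliers; inverse-Fourier transforming then defines $\Sigma_1(z)$ as an operator on $\mathcal{S}(\mathbb{R})$, and the outcome should be precisely the announced formula \eqref{sigma}.

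The main obstacle is the bookkeeping at the two endpoints $m=0$ and $m=N-1$: because $\psi^N_{-1}$ and $\psi^N_{N}$ are absent from $\mathcal{M}_1\psi^N_0$ and $\mathcal{M}_1\psi^N_{N-1}$, the corresponding shift terms in $\widetilde{\Sigma_1(z)a}$ must be arranged to vanish at these endpoint values of $\xi$. This can be absorbed by cutoff factors inserted in the weights defining $\Sigma_1(z)$, or, more cleanly, by exploiting the support restriction on $\widetilde a$ from Remark \ref{suptildea}: since only the discrete values of $\widetilde{\Sigma_1(z)a}$ at $\xi\in\{(\tau(z)-n\hbar)/\hbar:0\le n\le N-1\}$ enter \eqref{decomp}, shifted values falling outside the admissible window never contribute to the expansion. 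Once this convention is fixed, the identity $\mathcal{M}_1\psi^a_z=\psi^{\Sigma_1(z)a}_z$ follows immediately from the reindexing computation.
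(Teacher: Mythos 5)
Your proposal is correct and follows essentially the same route as the paper: expand $\psi^a_z$ on the basis $\{\psi^N_n\}$ via \eqref{decomp}, apply the nearest-neighbour shift termwise, reindex, match coefficients so that the binomial ratios $\overline{\varphi^N_{m\mp1}(z)}/\overline{\varphi^N_m(z)}$ and the factors $\sqrt{C^N_m/C^N_{m\mp1}}$ become multipliers in the variable $\frac{\tau(z)}\hbar-\xi$, i.e.\ functions of $\frac{\tau(z)}\hbar-i\partial_x$ composed with multiplication by $e^{\pm ix}$, which is exactly how \eqref{sigma} is obtained. One small caveat: of your two endpoint remedies, the cutoff option is the one that matches the paper (cf.\ \eqref{munu}--\eqref{qui}); the band-limitation of $\widetilde a$ from Remark \ref{suptildea} does not by itself kill the spurious boundary terms --- what does is the vanishing of the natural weights $\mu(0)=0$, $\nu(N-1)=0$, the cutoffs merely making these multipliers well defined for all real $\xi$ without changing their values at the discrete points that actually enter the expansion.
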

\begin{proof}
By \eqref{decomp} we get that, calling $\conss^N_n=(\cons^N_n)^{-\frac12}$  (once again $\hbar=\frac1{N-1}$),
\[
\psi^a_z=\sum_{n=0}^{N-2}\widetilde a\left(\frac{\tau(z)-n\hbar}\hbar\right)
\sqrt N\sqrt{\binom{N-1}{n}}\bar z^nD^N_n
\psiN_{n}.
\]
Therefore
\begin{eqnarray}
\mathcal\unun_1\psi^a_z\nonumber\\
=\widetilde a\left(\frac{\tau(z)}\hbar\right)\sqrt ND^N_0\psiN_1\nonumber\\
+
\sum_{n=1}^{N-2}\widetilde a\left(\frac{\tau(z)-n\hbar}\hbar\right)
\sqrt N\sqrt{\binom{N-1}{n}}\bar z^nD^N_n
(\psiN_{n-1}+\psiN_{n+1})\nonumber\\
+
\widetilde a\left(\frac{\tau(z)-1}\hbar\right)\sqrt N\bar z^{N-1}D^N_{N-1}\psiN_{N-2}\nonumber\\
=
\widetilde a\left(\frac{\tau(z)}\hbar\right)\sqrt ND^N_0\psiN_1
+
\widetilde a\left(\frac{\tau(z)-1}\hbar\right)\sqrt N\bar z^{N-1}D^N_{N-1}\psiN_{N-2}\nonumber\\
+
\sum_{n=0}^{N-3}
\widetilde a\left(\frac{\tau(z)-(n+1)\hbar}\hbar\right)
\sqrt N\sqrt{\binom{N-1}{n+1}}\bar z^{n+1}D^N_{n+1}
\psiN_{n}\nonumber\\
+
\sum_{n=2}^{N-1}
\widetilde a\left(\frac{\tau(z)-(n-1)\hbar}\hbar\right)
\sqrt N\sqrt{\binom{N-1}{n-1}}\bar z^{n-1}D^N_{n-1}
\psiN_{n}\nonumber\\
=
\sum_{n=0}^{N-2}
\widetilde a\left(\frac{\tau(z)-(n+1)\hbar}\hbar\right)
\sqrt N\sqrt{\binom{N-1}{n+1}}
\bar z^{n+1}D^N_{n+1}\psiN_{n}\label{nord}\\
+
\sum_{n=1}^{N-1}
\widetilde a\left(\frac{\tau(z)-(n-1)\hbar}\hbar\right)
\sqrt N\sqrt{\binom{N-1}{n-1}}\bar z^{n-1}
D^N_{n-1}\psiN_{n}\label{sud}
\end{eqnarray}
Let us consider the sum in \eqref{sud}. On can write it as
\begin{eqnarray}
\psi_{{sud}}&=&\sum_{n=0}^{N-1}\frac1{\bar z}\mu(n)\widetilde a\left(\frac{\tau(z)-(n-1)\hbar}\hbar\right)
\sqrt N\sqrt{\binom{N-1}{n}}\bar z^nD^N_n
\psiN_n\nonumber\\
&=&
\sum_{n=0}^{N-1}\frac1{\bar z}\mu(n)\widetilde a\left(\frac{\tau(z)-(n-1)\hbar}\hbar\right)
\bar{\varphi_n(z)}\varphi_n
\nonumber
\end{eqnarray}
with
\be\label{ddefmu}
\mu(n)=\left\{\begin{array}{l}
\sqrt{
\frac n{N-n}
}
,\ n>0\\
0,\ n=0\end{array}\right.
\ee
We get that
\[
\psi_{{sud}}=\psi^{b_{{sud}}}_z
\]
where
\[
b_{{sud}}=\sqrt{\frac{C^N_{\cdot}}{C^N_{\cdot-1}}}\mu\left(\frac{\tau(z)}\hbar
-i\partial_x\right)\frac{e^{ix}}{\bar z}a=\Sigma_1^-a.
\]
Here we have denote by $\frac{C^N_{\cdot}}{C^N_{\cdot-1}}$  the function defined out of \eqref{cNn} by
\be\label{fraccncn}
\frac{C^N_{\cdot}}{C^N_{\cdot-1}}:\ \xi\in]0,N[\to 
\sqrt{\frac{\xi}{N-\xi}}
\frac{\int_0^1\left|\widetilde a\left(\frac{\tau-\xi\hbar}\hbar\right)\right|^2\left(\frac\tau{1-\tau}\right)^\xi
{(1-\tau)^{N-1}}\frac{d\tau}\hbar}
{\int_0^1\left|\widetilde a\left(\frac{\tau-(\xi-1)\hbar}\hbar\right)\right|^2\left(\frac\tau{1-\tau}\right)^\xi
{(1-\tau)^{N-1}}\frac{d\tau}\hbar},
\ee
and $\sqrt{\frac{C^N_{\cdot}}{C^N_{\cdot-1}}}\mu$ is meant as the product of the two functions, i.e. $\sqrt{\frac{C^N_{\cdot}}{C^N_{\cdot-1}}}\mu(\xi)=
\sqrt{\frac{C^N_{\cdot}}{C^N_{\cdot-1}}}(\xi)\mu(\xi)$ using \eqref{fraccncn}. Note finally that, by the band limited hypothesis on $a$ in Remark \ref{suptildea}, $b_{{sud}}$ is well defined.

Similarly we get that the sum in \eqref{nord} is 
\[
\psi_{{nord}}=\sum_{n=0}^{N-1}{\bar z}\nu(n)\widetilde a\left(\frac{\tau(z)-(n+1)\hbar}\hbar\right)
\sqrt N\sqrt{\binom{N-1}{n}}\bar z^nD^N_n
\psiN_n
\]
with
\[
\nu(n)=\left\{\begin{array}{l}
\sqrt{
\frac{N-n-1}{n+1}
}
,\ n<N-1\\
0,\ n=N-1\end{array}\right.
\]
So
\[
\psi_{{nord}}=\psi^{b_{{nord}}}_z
\]
where
\[
b_{{nord}}=\sqrt{\frac{C^N_{\cdot}}{C^N_{\cdot+1}}}\nu^N\left(\frac{\tau(z)}\hbar
-i\partial_x\right){\bar z}e^{-ix}a=\Sigma_1^+a.
\]
We define
\be\label{sigma}
\Sigma_1(z)=
\sqrt{\frac{C^N_{\cdot}}{C^N_{\cdot+1}}}\mu^N\left(\frac{\tau(z)}\hbar
-i\partial_x\right)\frac{e^{ix}}{\bar z}+\sqrt{\frac{C^N_\cdot}{C^N_{\cdot-1}}}\nu^N\left(\frac{\tau(z)}\hbar
-i\partial_x\right){\bar z}e^{-ix}.
\ee 
\[
=
\sqrt{\frac{C^N_{\cdot}}{C^N_{\cdot+1}}}\Sigma_1^+(z)+
\sqrt{\frac{C^N_\cdot}{C^N_{\cdot-1}}}\Sigma_1^-(z)\hskip 4.3cm
\]
where 
\be\label{munu}
\mu^N=\chi_{[\frac12,N-\frac12]}\mu,\ \nu^N=\chi_{[-\frac12,N-\frac32]}\nu,
\ee
$\chi\in C^\infty(\R)$ satisfies
\be\label{qui}
\chi_{[a,b]}(\xi)=\left\{\begin{array}{ccl}
0&\mbox{ if }&\xi\leq a\\
\chi'(\xi)>0&\mbox{ if }&a<\xi<a+\frac12\\
1&\mbox{ if }&a+\frac12\leq \xi\leq b=\frac12\\
\chi'(\xi)<0&\mbox{ if }&b-\frac12<\xi<b\\
0&\mbox{ if }&b\leq \xi
\end{array}
\right.
\ee
and 
$\mu^N\left(\frac{\tau(z)}\hbar
-i\partial_x\right)$ and 
$\nu^N\left(\frac{\tau(z)}\hbar
-i\partial_x\right)$ are defined by the spectral theorem applied to the operator $-i\partial_x$ acting on $L^2(\mathbb R)$. Moreover $\psi^{b_{sud}}_z,\ \psi^{b_{nord}}_z$ depend only on $\mu^N(\frac\tau\hbar-n),\ \nu^N(\frac\tau\hbar-n)$, so that they depend only on the properties \eqref{qui} of $\chi$.
\end{proof}
{\bf In order to make the notations a bit lighter, we will skip the over-script $N$ in $\mu^N$ and $\nu^N$ in the sequel of the paper.}
\subsection{(General) trigonometric matrices}\label{trigomat}

Let
\be\label{sigmapm}
\Sigma_1^+(z)=\Sigma_1^+=
\mu^N\left(\frac{\tau(z)}\hbar
-i\partial_x\right)\frac{e^{ix}}{\bar z},\ \Sigma_1^-(z)=\Sigma_1^-=\nu^N\left(\frac{\tau(z)}\hbar
-i\partial_x\right){\bar z}e^{-ix}
\ee
as defined by \eqref{sigma}.

We get easily the following result.
\begin{lemma}\label{cefutdur}
\begin{eqnarray}
\Sigma_1^+\Sigma_1^-&=&\chi_{[+\frac12,N-\frac12]}\left(\frac{\tau(z)}\hbar-i\partial_x\right)\label{+-}\\
\Sigma_1^-\Sigma_1^+&=&\chi_{[-\frac12,N-\frac32]}\left(\frac{\tau(z)}\hbar-i\partial_x\right)\label{-+}\\
{[}\Sigma_1^-,\Sigma_1^+{]} 
&=&\bar\chi\left(\frac{\tau(z)}\hbar-i\partial_x\right)
\label{-++-}
\end{eqnarray}
is
\[
\bar\chi(\xi)=\left\{
\begin{array}{ccl}
0&\mbox{ if }&\xi\leq -\frac12\\
0<\bar\chi'&\mbox{ if }&-\frac12<\xi<0\\
1&\mbox{ if }&0\leq\xi\leq\frac12\\
\bar\chi'<0&\mbox{ if }&\frac12<\xi<1\\
0&\mbox{ if }&1\leq \xi\leq N-2\\
\bar\chi'<0&\mbox{ if }&N-2< \xi<N-\frac32\\
-1&\mbox{ if }&N-\frac32\leq\xi\leq N-1\\
0<\bar\chi'&\mbox{ if }&N-1\leq  \xi\leq N-\frac12\\
0&\mbox{ if }&N-\frac12\leq\xi
\end{array}\right.
\]
\end{lemma}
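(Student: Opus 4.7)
The plan is to collapse each operator product to a scalar function of the single self-adjoint operator $\xi := \frac{\tau(z)}{\hbar} - i\partial_x$, and then to read off the stated piecewise profiles. The main ingredient is the Heisenberg-type commutation identity
\begin{equation*}
e^{\pm ix}\, f(-i\partial_x) \;=\; f(-i\partial_x \mp 1)\, e^{\pm ix},
\end{equation*}
applied to $f=\mu^N$ or $f=\nu^N$ (the scalar $\tau(z)/\hbar$ inside $\xi$ commutes with everything). Since $\bar z$ is a c-number, the factors $\bar z$ and $1/\bar z$ in \eqref{sigmapm} cancel in both $\Sigma_1^+\Sigma_1^-$ and $\Sigma_1^-\Sigma_1^+$. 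Pulling the surviving exponential through the cutoff and then cancelling $e^{ix}e^{-ix}=1$ yields at once
\begin{equation*}
\Sigma_1^+\Sigma_1^- \;=\; \mu^N(\xi)\,\nu^N(\xi-1), \qquad \Sigma_1^-\Sigma_1^+ \;=\; \nu^N(\xi)\,\mu^N(\xi+1).
\end{equation*}

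Next, I would insert the explicit values $\mu(\eta)=\sqrt{\eta/(N-\eta)}$ and $\nu(\eta)=\sqrt{(N-\eta-1)/(\eta+1)}$ from \eqref{ddefmu} and its analogue. The algebraic cancellations
\begin{equation*}
\mu(\xi)\,\nu(\xi-1) \;=\; 1 \ \text{on}\ (0,N), \qquad \nu(\xi)\,\mu(\xi+1) \;=\; 1 \ \text{on}\ (-1,N-1)
\end{equation*}
are immediate, so the content is carried entirely by the cutoff factors. Because $\chi_{[-\frac12,N-\frac32]}(\xi-1)$ and $\chi_{[\frac12,N-\frac12]}(\xi)$ are supported on the same interval and have the same shape by translation (and likewise for the other pair), the products reduce to $\chi_{[\frac12,N-\frac12]}(\xi)^2$ and $\chi_{[-\frac12,N-\frac32]}(\xi)^2$, which again satisfy the qualitative profile \eqref{qui} and so deliver \eqref{+-}--\eqref{-+}.

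For the commutator \eqref{-++-} I would simply subtract the two scalar functions,
\begin{equation*}
\bar\chi(\xi) \;=\; \chi_{[-\frac12,N-\frac32]}(\xi)^2 \;-\; \chi_{[\frac12,N-\frac12]}(\xi)^2,
\end{equation*}
and check region by region. On the bulk $[1,N-2]$ both cutoffs equal $1$ and the difference vanishes. On $[-\frac12,1]$ only the first term is nonzero, producing a positive bump equal to $+1$ on the plateau $[0,\frac12]$ with monotone transitions on $(-\frac12,0)$ and $(\frac12,1)$. Symmetrically, on $[N-2,N-\frac12]$ only the second term contributes, producing the negative bump equal to $-1$ on $[N-\frac32,N-1]$. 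This is exactly the piecewise description of $\bar\chi$ in the statement. The only real subtlety, and the main (essentially cosmetic) obstacle, is the passage from $\chi^2$ to $\chi$ in the first two identities: as pointed out in the sentence following \eqref{qui}, only the qualitative shape of $\chi$ matters in the sequel, and in particular when the operators eventually act through the building states $\psi^a_z$ only integer arguments of the cutoffs appear, where $\chi\in\{0,1\}$ and $\chi^2=\chi$ tautologically.
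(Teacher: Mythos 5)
Your proposal is correct and is essentially the argument the paper leaves implicit (the lemma is introduced with ``We get easily the following result'' and no written proof): conjugating functions of $-i\partial_x$ by $e^{\pm ix}$, cancelling the c-numbers $\bar z$, using $\mu(\xi)\nu(\xi-1)=\nu(\xi)\mu(\xi+1)=1$, and observing that the resulting products of cutoffs have the qualitative profile \eqref{qui}. Your handling of the $\chi^2$ versus $\chi$ point is exactly in the spirit of the paper, which only uses the properties \eqref{qui} of $\chi$ (and, as you note, only integer arguments ever occur when acting on the states $\psi^a_z$), so no genuine gap remains.
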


%
\begin{remark}
When $z$ is far away from the origin or the infinity, the ``symbol" $\Sigma$ at $z$ is just an operator of multiplication, therefore ``commutative". And it is as expected equal to, basically, $2\cos \theta$. But $2\cos\theta$ is not regular at the two poles, and the trace of this singularity is the fact that $\Sigma(z)$ becomes a non-local operator when $z$ close to the poles, coming from the fact that the vector field expressed by the transport equation becomes infinite.
\end{remark}
\begin{remark}\label{defl}
By \eqref{decompa} we have that
\[
\int_\C|\psi^a_z\rangle\langle\psi^a_z|
d\mu_N(z)=C^N_{L}
\]
where $L\varphi_n=n\varphi_n$. Therefore we could also look at matrices acting on $\mathcal H^N$ instead of $\mathcal H^N_a$ by conjugation by $C^N_{L}$. But this doesn't give anything interesting for symbols.
\end{remark}
Let us generalize this to the situation where $M$ has the form, for $\alpha\in C^\infty(]0,1[)\cap L^\infty([0,1])$\footnote{by this we mean that $\alpha$ is bounded on $[0,1]$ and $C^\infty$ on any open subset of $[0,1]$.},
\[
\unun_{1,\alpha}=\begin{pmatrix}
0&\alpha(\hbar)&0&0&\dots&0\\
\alpha(\hbar)&0&\alpha(2\hbar)&0&\dots&0\\
&&\dots&&&\\
&&\dots&&&\\
&&\dots&&&\\
0&\dots&0&\alpha((N-3)\hbar)&0&\alpha((N-2)\hbar)\\
0&\dots&0&0&\alpha((N-2)\hbar)&0
\end{pmatrix}
\]
The operator $\mathcal\unun_{1,\alpha}$ on $\mathcal H^a_N$ whose matrix on the orthonormal basis $\{\psiN_n,n=0,\dots,N-1\}$ is $M_{1,\alpha}$ becomes
\[
\left\{\begin{array}{lcl}
\mathcal\unun_{1,\alpha}\psiN_0&=&\alpha(\hbar)\psiN_1\\
\mathcal\unun_{1,\alpha}\psiN_i&=&\alpha((i-1)\hbar)\psiN_{i-1}+\alpha((i+1)\hbar)\psiN_{i+1},\ \ \ 1\leq i\leq N-2\\
\mathcal\unun_{1,\alpha}\psiN_{N-1}&=&\alpha((N-2)\hbar)\psiN_{N-2}
\end{array}\right.
\]
The same type of computations contained in the proof of Proposition \ref{thefact} provides, thanks to Lemma \ref{cefutdur}, the proofs of the next Propositions \ref{thefactalpha}, \ref{thefactbeta} and \ref{thefactgamma} below.
\begin{proposition}\label{thefactalpha}
\[
\mathcal\unun_{1,\alpha}\psi^a_x=\psi^{\Sigma_{1,\alpha}(z)a}_z
\]
where 
\be\label{sigmalpha}
\Sigma_{1,\alpha}(z)=\frac{e^{ix}}{\bar z}(\alpha(\hbar\cdot)\mu)\left(\frac{\tau(z)}\hbar+\frac12-i\partial_x\right)+{\bar z}e^{-ix}(\alpha(\hbar\cdot)\nu)\left(\frac{\tau(z)}\hbar-\frac12-i\partial_x\right).
\ee
In particular 
if ${\sqrt{\tau(1-\tau)}}{\alpha(\tau)}\in C^\infty([0,1])$, 
so that $\alpha(\tau)e^{i\theta}\in C^\infty(S^2)$, then, for all $z\in S^2$, 
$\Sigma_{1,\alpha}(z)\sim2\alpha(\tau(z))\cos{2(x+\theta(z))}$  as $N\to\infty$. 

Otherwise, 
this last asymptotic equality is valid only for $z$ away from the two poles.
\end{proposition}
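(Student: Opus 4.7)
\medskip

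\noindent\textbf{Proof plan for Proposition \ref{thefactalpha}.}

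The plan is to imitate, mutatis mutandis, the proof of Proposition \ref{thefact}. First I would write
$$\mathcal{M}_{1,\alpha}\psi^a_z = \sum_{n=0}^{N-1}\widetilde a\Bigl(\tfrac{\tau(z)-n\hbar}\hbar\Bigr)\sqrt N\sqrt{\tbinom{N-1}{n}}\bar z^{n}D^N_{n}\,\mathcal{M}_{1,\alpha}\psi^N_n,$$
and then substitute the three-line action of $\mathcal M_{1,\alpha}$ on the $\psi^N_n$. As in the proof of Proposition \ref{thefact} the resulting double sum splits naturally into a ``south'' part (shifting $n\mapsto n-1$) and a ``north'' part (shifting $n\mapsto n+1$), and after reindexing each part gives a series of the same shape as \eqref{decomp} but evaluated at an integrand which now carries an extra factor $\alpha((n\mp 1)\hbar)$ coming from the off-diagonal entries of $M_{1,\alpha}$.

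Next I would repeat the trick used in Proposition \ref{thefact}: in each reindexed sum absorb the geometric factors into $\bar\varphi^N_n(z)\varphi^N_n$ up to a prefactor, which produces $\bar z^{\pm1}e^{\mp ix}$ together with the sequences $\mu(n)$, $\nu(n)$ of \eqref{ddefmu}. The key new point is that the presence of $\alpha((n\mp1)\hbar)$ at position $n$ is, via the spectral theorem for $-i\partial_x$, exactly the statement that this factor becomes $\alpha(\hbar\cdot)$ evaluated at the shifted argument $\tfrac{\tau(z)}\hbar\mp\tfrac12-i\partial_x$. Collecting the south and north contributions then yields precisely \eqref{sigmalpha}, provided we incorporate the cutoffs $\mu^N,\nu^N$ of \eqref{munu}, \eqref{qui} at the two endpoints to kill the boundary rows (the rows indexed $0$ and $N-1$) that do not fit the generic pattern; this absorption is legitimate because, by Remark \ref{suptildea}, $\widetilde a$ is supported in $[0,N]$, so the values of the symbol outside $[-\tfrac12,N-\tfrac12]$ are irrelevant.

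For the asymptotic claim I would proceed pointwise in $z$. Away from the two poles, $\tau(z)$ lies in a compact subset of $(0,1)$, so $\tfrac{\tau(z)}\hbar$ is of order $N$ and bounded away from $0$ and $N$; on the support of $\widetilde a(\cdot-\tfrac{\tau(z)}\hbar)$ the functions $\mu,\nu$ are smooth, and a direct Taylor expansion as $\hbar\to 0$ gives $\mu\bigl(\tfrac{\tau(z)}\hbar\bigr)\sim\sqrt{\tfrac{\tau(z)}{1-\tau(z)}}$ and $\nu\bigl(\tfrac{\tau(z)}\hbar\bigr)\sim\sqrt{\tfrac{1-\tau(z)}{\tau(z)}}$. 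Combining with $\bar z=\sqrt{\tfrac{\tau}{1-\tau}}e^{-i\theta}$, the prefactors $\tfrac{e^{ix}}{\bar z}$ and $\bar z e^{-ix}$ cancel the $\mu,\nu$ factors and leave $\alpha(\tau(z))(e^{i(x+\theta(z))}+e^{-i(x+\theta(z))})$, i.e. the stated asymptotic. In the globally smooth case $\sqrt{\tau(1-\tau)}\,\alpha(\tau)\in C^\infty([0,1])$, this product is exactly what is needed to cancel the singularities of $\mu$ at $\tau=1$ and of $\nu$ at $\tau=0$, so that the expansion is uniform up to the poles.

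The main technical obstacle is the bookkeeping at the two endpoint indices $n=0,N-1$: because of the edge rows of $M_{1,\alpha}$ the reindexed sums acquire spurious boundary terms, and one must check that the introduction of the smooth cutoffs $\mu^N,\nu^N$ reproduces these terms exactly without perturbing the bulk. The verification is essentially Lemma \ref{cefutdur} (applied with $\alpha\equiv1$) together with the explicit form of $\chi_{[a,b]}$ in \eqref{qui}, so the obstacle is notational rather than conceptual.
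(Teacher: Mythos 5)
Your proposal is correct and follows essentially the same route as the paper, whose entire proof of Proposition \ref{thefactalpha} is the remark that the computations of Proposition \ref{thefact} carry over verbatim with the extra factor $\alpha((n\mp1)\hbar)$, "thanks to Lemma \ref{cefutdur}" — exactly the north/south splitting, reindexing, spectral-theorem absorption and cutoff bookkeeping you describe, with your pointwise asymptotic discussion being if anything more explicit than the paper's. The only looseness — whether the spectral shift attached to $\alpha$ is $\mp\tfrac12$ or $\mp1$ depending on which side of $e^{\pm ix}$ the function of $-i\partial_x$ is placed, and the dropped $\sqrt{C^N_\cdot/C^N_{\cdot\mp1}}$ factors — is already present in the statement \eqref{sigmalpha} itself and does not affect the argument away from the poles.
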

\vskip 1cm
Let now, again for $\beta\in C^\infty(]0,1[)\cap L^\infty([0,1])$,  
\[
\unun_{2,\beta}=\begin{pmatrix}
0&0&\beta(\hbar)&0&0&\dots&0\\
0&0&0&\beta(2\hbar)&0&\dots&0\\
\beta(2\hbar)&0&0&0&\beta(3\hbar)&\dots&0\\
&&\dots&&&\\
\dots&0&\beta((N-4)\hbar)&0&0&0&\beta((N-3)\hbar)\\
0&\dots&0&\beta((N-3)\hbar)&0&0&0\\
0&\dots&0&0&\alpha((N-2)\hbar)&0&0
\end{pmatrix}
\]
The operator $\mathcal\unun_{2,\beta}$ on $\mathcal H^a_N$ whose matrix on the orthonormal basis $\{\psiN_n,n=0,\dots,N-1\}$ is $M_{2,\beta}$ becomes
\[
\left\{\begin{array}{lcl}
\mathcal\unun_{2,\beta}\psiN_0&=&\alpha(\hbar)\psiN_2\\
\mathcal\unun_{2,\beta}\psiN_1&=&\beta(2\hbar)\psiN_3\\
\mathcal\unun_{2,\beta}\psiN_i&=&\beta((i-2)\hbar)\psiN_{i-2}+\alpha((i+2)\hbar)\psiN_{i+2},\ \ \ 2\leq i\leq N-3\\
\mathcal\unun_{2,\beta}\psiN_{N-2}&=&\alpha((N-4)\hbar)\psiN_{N-4}\\
\mathcal\unun_{2,\beta}\psiN_{N-1}&=&\alpha((N-3)\hbar)\psiN_{N-3}
\end{array}\right.
\]
\begin{proposition}\label{thefactbeta}
\[
\mathcal\unun_{2,\beta}\psi^a_x=\psi^{\Sigma_{2,\beta}(z)a}_z
\]
where 
\be\label{sigmabeta}
\Sigma_{2,\beta}(z)=\frac{e^{2ix}}{\bar z^2}(\beta(\hbar\cdot)\mu_2)\left(\frac{\tau(z)}\hbar+\frac32-i\partial_x\right)+{\bar z}^2e^{-i2x}(\beta(\hbar\cdot)\nu_2)\left(\frac{\tau(z)}\hbar-\frac32-i\partial_x\right).
\ee
with
\[
\mu_2(n)=\sqrt{\frac{(n)(n-1)}{(N+1-n)(N-n)}}\frac{D^N_{n+2}}{D^N_n}
\mbox{ and }\nu_2(n)=\sqrt{\frac{(N-1-n)(N-2-n)}{(n+2)(n+1)}}\frac{D^N_{n-2}}{D^N_n}
\]
And again  if  ${\tau(1-\tau)}{\beta(\tau)}\in C^\infty([0,1])$, so that $\beta(\tau)e^{i2\theta}\in C^\infty(S^2)$, then, for all $z\in S^2$, 
$\Sigma_{2,\beta}(z)\sim2\beta(\tau(z))\cos{2(x+\theta(z))}$  as $N\to\infty$. 

Otherwise,  this last asymptotic equality is valid only for $z$ away from the two poles.
\end{proposition}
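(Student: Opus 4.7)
The plan is to closely follow the scheme used to establish Proposition \ref{thefact}, adapted to the shift-by-two tridiagonal pattern of $\mathcal{M}_{2,\beta}$.

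First, I would substitute into $\mathcal{M}_{2,\beta}\psi^a_z$ the expansion of $\psi^a_z$ in the basis $\{\psi^N_n\}$, obtained from \eqref{decomp} via $\varphi^N_n = D^N_n\psi^N_n$ with $D^N_n:=(C^N_n)^{-1/2}$. Applying the action rules produces two sums, a ``south'' sum collecting the $\psi^N_{n-2}$ contributions and a ``north'' sum collecting the $\psi^N_{n+2}$ contributions, analogous to \eqref{nord}--\eqref{sud}, together with four endpoint terms coming from the rows corresponding to $\psi^N_0,\psi^N_1,\psi^N_{N-2},\psi^N_{N-1}$ which must be pulled out and absorbed by the boundary cutoffs. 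I would then reindex each sum so that its running index labels the target basis element $\psi^N_m$ rather than the source.

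Second, I would repackage each reindexed sum as a vector of the form $\psi^b_z$. At level $m$ the coefficient pattern required by \eqref{decomp} forces the multiplier
\[
\frac{\sqrt{\binom{N-1}{m\pm 2}}\,\bar z^{\,\pm 2}\,D^N_{m\pm 2}}{\sqrt{\binom{N-1}{m}}\,D^N_m},
\]
which is exactly $\bar z^{\pm 2}\mu_2(m)$, respectively $\bar z^{\pm 2}\nu_2(m)$, with the $\mu_2,\nu_2$ stated in the proposition; the shift by $\pm 2\hbar$ inside the argument of $\tilde a$ corresponds via Fourier transform to the translation operators $e^{\pm 2ix}$ acting on $a$. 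Expressing each combinatorial weight as a function of the self-adjoint operator $\tau(z)/\hbar - i\partial_x$ applied to $a$ then yields \eqref{sigmabeta}. The sharp boundary behaviour of $\mu_2,\nu_2$ is smoothed following \eqref{munu}--\eqref{qui}; as in the proof of Proposition \ref{thefact}, the state $\psi^{\Sigma_{2,\beta}(z)a}_z$ depends only on the values at the admissible interior lattice points and is therefore insensitive to the specific smoothing.

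Third, for the asymptotic statement I would pass to stereographic coordinates: since $\bar z^2 = \tfrac{\tau}{1-\tau}e^{-2i\theta}$, one has $\bar z^{2}e^{-2ix} = \tfrac{\tau}{1-\tau}e^{-2i(x+\theta)}$ and $e^{2ix}/\bar z^{2} = \tfrac{1-\tau}{\tau}e^{2i(x+\theta)}$. For $n\hbar\to\tau\in(0,1)$, $\mu_2(n)\to\tau/(1-\tau)$ and $\nu_2(n)\to(1-\tau)/\tau$ by direct expansion of the displayed formulae, while $D^N_{m\pm 2}/D^N_m\to 1$ by \eqref{casymp}. Combined with the Fourier multiplier $\beta(\hbar\cdot)$, the two half-symbols converge to $\beta(\tau)e^{\pm 2i(x+\theta)}$, and their sum gives $\Sigma_{2,\beta}(z)\sim 2\beta(\tau(z))\cos 2(x+\theta(z))$. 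The main obstacle is the control at the two poles, where $\mu_2$ develops a singularity of order $(1-\tau)^{-1}$ and $\nu_2$ of order $\tau^{-1}$: the hypothesis $\tau(1-\tau)\beta(\tau)\in C^\infty([0,1])$ is precisely what is needed to compensate these singularities and obtain the limit uniformly up to the poles; without it, the convergence persists only on compacts of $S^2$ avoiding the two poles, as asserted.
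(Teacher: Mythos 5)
Your proposal is correct and follows essentially the same route as the paper, which proves Proposition \ref{thefactbeta} only by referring to the computation in the proof of Proposition \ref{thefact} (expand $\psi^a_z$ on $\{\psi^N_n\}$, apply the shift-by-two action, reindex, absorb boundary terms in the cutoffs, and read off the ratios of binomials and of the $D^N_n$ as a function of $\tau(z)/\hbar-i\partial_x$ conjugated by $e^{\pm 2ix}$), plus the same scaling limit $n\hbar\to\tau$ for the asymptotics. The only caveat is bookkeeping: your identification attaches $D^N_{m\mp2}/D^N_m$ to the $\bar z^{\pm2}$ multipliers as the direct computation gives, which matches the paper up to the (apparently typographical) placement of the $D^N_{n\pm2}/D^N_n$ factors and of the $\pm\tfrac32$ shifts in \eqref{sigmabeta}.
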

\vskip 1cm
Let us finally remark that when $M_{0\gamma}$ is diagonal with diagonal matrix elements $\gamma(i\hbar)$, then $\Sigma_{0\gamma}=\gamma(\tau(z))\mbox{Id}$, where $\mbox{Id}$ is the identity on $L^2(\R)$.

\subsection{Action of a general matrix}\label{action}

For $k=-(N-1),\dots, N-1$, let us call $N_{k;\gamma_k}$ the matrix with  non zero coefficients lying only on the $k$th diagonal and being  equal to $\gamma_k(j)=\gamma_k(j), k\leq j\leq N-i-k$. That is to say:
\[
N_{k;\gamma_k}={(M^+_1)}^kM_{0,\gamma_k}. 
\]
Let moreover
\begin{eqnarray}
\mu_k(n)
&=&
\prod_{j=0}^{k-1}\mu(n-j)
=\sqrt{\binom{n}{k}\binom{N-n+k-1}{k}^{-1}}
\prod_{j=0}^{k-1}\chi_{[\frac12,N-\frac12]}(n-j)\hskip 1.9cm k>0
\nonumber\\
\mu_0(n)&=&1\nonumber\\
 \mu_k(n)&=&
\prod_{j=k-1}^{0}\nu(n+j)
=\sqrt{\binom{N-1-n}{k}\binom{n+k}{k}^{-1}}
\prod_{j=k-1}^{0}\chi_{[-\frac12,N-\frac12]}(n+j)\hskip 1cm k<0\nonumber
\end{eqnarray}
The same arguments as in the proofs of Propositions \ref{thefact}, \ref{thefactalpha} and \ref{thefactbeta} leads easily to the following more general result.
\begin{proposition}\label{thefactgamma}
\[
\mathcal\unun_{k,\gamma_k}\psi^a_x=\psi^{\Sigma_{k,\gamma_k}(z)a}_z
\]
where 
\begin{eqnarray}\label{sigmabeta}
\Sigma_{k,\gamma_k}(z)&=&\frac{e^{ikx}}{\bar z^k}(\gamma_k(\hbar\cdot)\mu_k)\left(\frac{\tau(z)}\hbar-i\partial_x\right)\sqrt{\frac{C^N_{-i\partial_x+k}}{C^N_{-i\partial_x}}}\nonumber\\
&+&{\bar z}^ke^{-ikx}(\gamma_k(\hbar\cdot)\nu_k)\left(\frac{\tau(z)}\hbar--i\partial_x\right)
\sqrt{\frac{C^N_{\frac{\tau(z)}\hbar--i\partial_x-k}}{C^N_{\frac{\tau(z)}\hbar--i\partial_x}}}.
\end{eqnarray}
And again  if  $(\tau(1-\tau))^{\frac {|k|}2}\gamma_k(\tau)\in C^\infty([0,1])$, so that $\gamma_k(\tau)e^{i2\theta}\in C^\infty(S^2)$, then, for all $z\in S^2$, 
$\Sigma_{k,\gamma_k}(z)\sim2\gamma_k(\tau(z))\cos{k(x+\theta(z))}$  as $N\to\infty$. 

Otherwise,  this last asymptotic equality is valid only for $z$ away from the two poles.
\end{proposition}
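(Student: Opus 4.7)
The plan is to follow the pattern of the proofs of Propositions \ref{thefact}, \ref{thefactalpha} and \ref{thefactbeta}, simply passing from index-shifts of magnitude $1$ or $2$ to a general magnitude $k$ and reading off the resulting operator acting on the density $a$ through the functional calculus of the self-adjoint operator $\tfrac{\tau(z)}{\hbar}-i\partial_x$ on $L^2(\mathbb R)$. The starting point is, by Definition \ref{defpsi} together with \eqref{decomp},
\[
\psi^a_z \;=\; \sum_{n=0}^{N-1} \widetilde a\!\left(\tfrac{\tau(z)-n\hbar}{\hbar}\right)\sqrt N\,\sqrt{\tbinom{N-1}{n}}\,\bar z^{\,n}\,D^N_n\,\psi^N_n,
\]
with $D^N_n=(C^N_n)^{-1/2}$, while $\mathcal M_{k,\gamma_k}$ acts on $\{\psi^N_n\}$ by $\mathcal M_{k,\gamma_k}\psi^N_n=\gamma_k(n\hbar)\,\psi^N_{n+k}$ (vanishing when $n+k\notin\{0,\dots,N-1\}$). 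Applying $\mathcal M_{k,\gamma_k}$ to $\psi^a_z$ and setting $m=n+k$ produces a sum whose $\psi^N_m$-coefficient reads
\[
\gamma_k((m-k)\hbar)\,\widetilde a\!\left(\tfrac{\tau(z)-(m-k)\hbar}{\hbar}\right)\sqrt N\,\sqrt{\tbinom{N-1}{m-k}}\,\bar z^{\,m-k}\,D^N_{m-k}.
\]

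To recognize this sum as $\psi^b_z$ for some density $b\in\mathcal S(\mathbb R)$, I would factor out the target normalization $\sqrt N\sqrt{\binom{N-1}{m}}\bar z^{\,m} D^N_m$ so as to exhibit the factor $\bar z^{-k}$, the binomial ratio $\sqrt{\binom{N-1}{m-k}/\binom{N-1}{m}}$ (which is exactly $\mu_k(m)$ away from the cutoff boundary), and the quotient $D^N_{m-k}/D^N_m=\sqrt{C^N_m/C^N_{m-k}}$. The residual Fourier-shift $\widetilde a(\,\cdot\,-k)$ is realized, in position space, by multiplication by $e^{ikx}$; the remaining $m$-dependence through $\gamma_k((m-k)\hbar)$, $\mu_k(m)$ and $\sqrt{C^N_m/C^N_{m-k}}$ is implemented by applying the spectral theorem to $-i\partial_x$, evaluated at $\tfrac{\tau(z)}{\hbar}-i\partial_x$. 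Collecting the factors in their correct order produces exactly the first term of the claimed formula for $\Sigma_{k,\gamma_k}(z)$; the second term arises symmetrically from the contribution of the $-k$th diagonal (equivalently, replacing $(M^+_1)^k$ by $(M^-_1)^k$), with $\mu_k$ and $\nu_k$ swapped and the phase reversed.

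For the asymptotic reduction to $2\gamma_k(\tau(z))\cos k(x+\theta(z))$ under the regularity hypothesis $(\tau(1-\tau))^{|k|/2}\gamma_k(\tau)\in C^\infty([0,1])$ and for $z$ away from the poles, I would combine the following elementary asymptotics: by Stirling, $\mu_k(\xi)$ at $\xi=\tau(z)/\hbar$ is equivalent to $(\tau/(1-\tau))^{k/2}$ and $\nu_k(\xi)$ to $((1-\tau)/\tau)^{k/2}$; by \eqref{casymp} one has $\sqrt{C^N_\cdot/C^N_{\cdot\pm k}}=1+O(1/N)$ uniformly on compacts of $(0,1)$; and in polar form $\bar z^{\pm k}=(\tau/(1-\tau))^{\pm k/2}e^{\mp ik\theta}$. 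Substituting and summing the two symmetric contributions, the $\tau$-powers cancel and the leading-order expression becomes $\gamma_k(\tau(z))\bigl(e^{ik(x+\theta(z))}+e^{-ik(x+\theta(z))}\bigr)$, as claimed.

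The main obstacle lies near the two poles $\tau(z)\in\{0,1\}$, where $\tau(z)/\hbar$ approaches either $0$ or $N$: there \eqref{casymp} breaks down (the correct behaviour being \eqref{casymp0} and \eqref{casymp1} instead), the cutoffs inside $\mu^N_k$ and $\nu^N_k$ become indispensable, and $\mu_k\bigl(\tfrac{\tau(z)}{\hbar}-i\partial_x\bigr)$ is no longer approximable by multiplication by a function of $\tau(z)$ alone. This genuinely operator-valued residue at the poles is precisely the phenomenon motivating the enlargement from T\"oplitz to $a$-T\"oplitz quantization, and the careful bookkeeping of the cutoffs during the reindexing is what distinguishes the rigorous identification from a purely formal one.
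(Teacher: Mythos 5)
Your proposal is correct and follows essentially the same route as the paper, which proves this proposition simply by invoking ``the same arguments as in the proofs of Propositions \ref{thefact}, \ref{thefactalpha} and \ref{thefactbeta}'': you expand $\psi^a_z$ in the basis $\{\psi^N_n\}$, apply the $k$-diagonal matrix, reindex, and read off the conjugating factors $e^{\pm ikx}$, $\bar z^{\mp k}$, the binomial ratios giving $\mu_k,\nu_k$, and the $\sqrt{C^N_\cdot/C^N_{\cdot\mp k}}$ normalizations via the functional calculus of $-i\partial_x$, exactly as in the toy-model computation, and your Stirling-plus-\eqref{casymp} argument for the asymptotics away from the poles matches the paper's intent (only note the harmless sign slip: with the paper's convention multiplication by $e^{ikx}$ produces the shift $\widetilde a(\cdot+k)$, which is what the reindexed coefficient requires).
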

\vskip 1cm
\subsection{Symbol}\label{symbol}
\ 

%
%
%
%

Let us first remark the following co-cycle property.
\begin{lemma}\label{coc}
\[
\sqrt{\frac{C^N_{-i\partial_x+k'}}{C^N_{-i\partial_x}}}e^{ikx}
\sqrt{\frac{C^N_{-i\partial_x+k}}{C^N_{-i\partial_x}}}
=
e^{ikx}
\sqrt{\frac{C^N_{-i\partial_x+k'+k}}{C^N_{-i\partial_x}}}
\]
so that
\begin{eqnarray}
\frac{e^{ik'x}}{\bar z^{k'}}(\gamma_{k'}(\hbar\cdot)\mu_{k'})\left(\frac{\tau(z)}\hbar-i\partial_x\right)\sqrt{\frac{C^N_{-i\partial_x+{k'}}}{C^N_{-i\partial_x}}}
\frac{e^{ikx}}{\bar z^k}(\gamma_k(\hbar\cdot)\mu_k)\left(\frac{\tau(z)}\hbar-i\partial_x\right)\sqrt{\frac{C^N_{-i\partial_x+k}}{C^N_{-i\partial_x}}}\nonumber\\
=
\frac{e^{ik'x}}{\bar z^{k'}}(\gamma_{k'}(\hbar\cdot)\mu_{k'})\left(\frac{\tau(z)}\hbar-i\partial_x\right)
\frac{e^{ikx}}{\bar z^k}(\gamma_k(\hbar\cdot)\mu_k)\left(\frac{\tau(z)}\hbar-i\partial_x\right)
\sqrt{\frac{C^N_{-i\partial_x+k+k'}}{C^N_{-i\partial_x}}}\nonumber
\end{eqnarray}
\end{lemma}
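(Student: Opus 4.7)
The plan is to reduce the statement to two simple ingredients: the spectral intertwining identity
\[
F(-i\partial_x)\, e^{ikx} = e^{ikx}\, F(-i\partial_x + k),
\]
valid for any Borel function $F$ by the functional calculus of $-i\partial_x$, together with the elementary pointwise factorization
\[
\sqrt{\frac{C^N_{\xi+k+k'}}{C^N_{\xi+k}}}\cdot \sqrt{\frac{C^N_{\xi+k}}{C^N_{\xi}}}=\sqrt{\frac{C^N_{\xi+k+k'}}{C^N_{\xi}}}.
\]
The first identity follows from the elementary relation $(-i\partial_x) e^{ikx} = e^{ikx}(-i\partial_x+k)$ by iteration on polynomials and extension by the spectral theorem for $-i\partial_x$, and is applied on the dense invariant subspace of Schwartz functions with Fourier transform supported in $[0,N]$ (Remark \ref{suptildea}).

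For the first displayed equation I apply the intertwining with $F(\xi)=\sqrt{C^N_{\xi+k'}/C^N_\xi}$ to commute the leftmost square root past $e^{ikx}$, producing $e^{ikx}\sqrt{C^N_{-i\partial_x+k+k'}/C^N_{-i\partial_x+k}}$ on the other side. I then right-multiply by the surviving factor $\sqrt{C^N_{-i\partial_x+k}/C^N_{-i\partial_x}}$: both square roots are functions of $-i\partial_x$ alone, hence commute, and the pointwise factorization above collapses their product into $\sqrt{C^N_{-i\partial_x+k+k'}/C^N_{-i\partial_x}}$, which is precisely the right-hand side.

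The second equation is obtained by threading exactly the same move through the composition. The scalar $\bar z^{-k}$ is transparent; the factor $(\gamma_k(\hbar\cdot)\mu_k)(\tau(z)/\hbar - (-i\partial_x))$ is itself a function of $-i\partial_x$ alone, hence commutes with any other such function, in particular with the square-root operator produced by sliding $\sqrt{C^N_{-i\partial_x+k'}/C^N_{-i\partial_x}}$ past $e^{ikx}$. Collecting the two remaining square-root factors on the right and collapsing them via the pointwise factorization once more delivers the stated equality.

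The only point demanding care is domain-theoretic: the functional calculus of $-i\partial_x$ applied to these ratios of $C^N$'s must remain well defined on the $a$-classes used throughout this section, i.e.\ the spectral variable of $-i\partial_x$ must stay inside the range where $\xi\mapsto C^N_\xi$, defined by \eqref{cNn} and with the asymptotics \eqref{casymp}--\eqref{casymp1}, is smooth and strictly positive. This is precisely ensured by the band-limited choice of $\widetilde a$ in Remark \ref{suptildea}; granted it, each step is a one-line symbolic manipulation and no serious analytic obstacle remains.
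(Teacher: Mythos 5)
Your proof is correct and is exactly the argument the paper leaves implicit (the lemma is stated as a remark without a written proof): the intertwining relation $F(-i\partial_x)e^{ikx}=e^{ikx}F(-i\partial_x+k)$ from the functional calculus, the mutual commutation of functions of $-i\partial_x$, and the telescoping of the ratios $C^N_{\xi+k+k'}/C^N_{\xi+k}\cdot C^N_{\xi+k}/C^N_{\xi}$. Your domain remark about the band-limited choice of $\widetilde a$ (Remark \ref{suptildea}) matches the paper's own justification for well-definedness, so nothing is missing.
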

Let us denote by $\mathcal N_{k;\gamma_k}$ the operator whose matrix on the basis $\{\psi^N_n,\ n=0\dots N-1\}$ is $N_{k;\gamma_k}$.

We define the symbol of $\mathcal N_{k;\gamma_k}$ at the point $z$ as the operator
\be\label{whyes}
\widetilde\sigma_{{k;\gamma_k}}(z):=
\frac{e^{ikx}}{\bar z^k}
(\gamma_k(\hbar\cdot)\mu_k)\left(\frac{\tau(z)}\hbar-i\partial_x\right)
=\frac{e^{ikx}}{\bar z^k}\gamma_k(\tau(z)-i\hbar\partial_x)\mu_k\left(\frac{\tau(z)}\hbar-i\partial_x\right)
\ee
acting on $L^2(\R)$.
\begin{definition}\label{defsym}
Let $\gamma(\tau,\theta)=\sum\limits_{k=-K}^{K}\gamma_k(\tau)e^{ik\theta}$ be a trigonometric function on the sphere with each $\gamma_k\in C^\infty(]0,1[)\cap L^\infty([0,1])$.

Let 
\be\label{defngamma}
N_\gamma=\sum\limits_{-(N-1)}^{N-1}N_{k;\gamma_k}\mbox{ where 
}(N_{k;\gamma_k})_{ij}=\delta_{j,i+k}
\gamma_k((k-\frac{(-1)^k-1}2)\hbar)
\ee
 and $\mathcal N_\gamma$ the operator whose matrix on the basis $\{\psi^N_n\}$ is $N_\gamma$.

We call symbol of $\mathcal N_\gamma$ at the point $z\in S^2$ the operator
\be\label{whynot}
\sigma[\mathcal N_{
\gamma}](z)=\sum\limits_{k=-(N-1)}^{N-1}\widetilde\sigma_{{k;\gamma_k}}(z)
\ee
where $\widetilde\sigma_{{k;\gamma_k}}$ is given by \eqref{whyes}.
\end{definition}

\vskip 3cm

\vskip 1cm
Let us finish this section by giving a more global ``quantization" type definition of the symbol. This end of Section \ref{symbol} is not necessary for the understanding of the rest of the paper.

Note that
\begin{eqnarray}
\widetilde\sigma
_{k;\gamma_{k}}&=&
\left(
\mu_N\left(\frac{\tau(z)}\hbar-i\partial_x\right)
\frac
{e^{ix}}{\bar z}\right)^k
\gamma_k
\left({\tau(z)}-i\hbar\partial_x\right)\nonumber\\
&=&
\left(
\mu_N\left(\frac{\tau(z)}\hbar-i\partial_x\right)
\frac
{e^{i(x+\theta(z))}}{|z|}\right)^k
\gamma_k
\left({\tau(z)}-i\hbar\partial_x\right)\nonumber.\\
&=&
\frac{e^{ik\theta(z)}}{|z|^k}
\left(
\mu_N\left(\frac{\tau(z)}\hbar-i\partial_x\right)
{e^{ix}}\right)^k
\gamma_k
\left({\tau(z)}-i\hbar\partial_x\right)\nonumber.\\
&=&
\frac{e^{ik\theta(z)}}{|z|^k}
\left(
\sqrt{\frac{\tau(z)-i\hbar\partial_x}{1-(\tau(z))-i\hbar\partial_x)}}
{e^{ix}}\right)^k
\gamma_k
\left({\tau(z)}-i\hbar\partial_x\right)\nonumber.\\
&=&
\left(
\frac{Z(z)}{|z|}\right)^k
\gamma_k
\left({\tau(z)}-i\hbar\partial_x\right)\nonumber.
\end{eqnarray}
Here the operator $Z(z)$ is the canonical 
(anti) 
pseudodifferential quantization of the canonical function $\mathcal Z(x,\tau):=\sqrt{\frac{\tau}{1-\tau}}e^{ix}$, ``shifted by $(\tau(z),\theta(z)$" where $z=\sqrt{\frac{\tau(z0}{1-\tau(z)}}e^{i\theta(z)}$, that is $\mathcal Z_z(\tau,x)=f(\tau+\tau(z),x+\theta(z))$. 

More precisely, the (anti) 
pseudodifferential quantization of of a function $g$ is the 
pseudodifferential quantization of $G$ where one put all the differentail part on the left (rather than on the right for the standard pseudodifferential calculus introduced at the beginning of Section \ref{intro}.

Namely, for any function $g(e^{i\theta},\tau)$ on the sphere, we define $\mbox{Op}^{APD}[g]$ and $\mbox{Op}^{APD}_z[g]$ by their integral kernels
\begin{eqnarray}
\mbox{Op}^{APD}[g](x,y)&=&
\int g(y,\tau)e^{i\tau(x-y)/\hbar}d\tau/(2\pi\hbar)\nonumber\\
\mbox{Op}^{APD}_{z}[g](x,y)&=&\int g({y+\theta(z)},\tau+\tau(z))e^{i\tau(x-y)/\hbar}d\tau/(2\pi\hbar),\nonumber
\end{eqnarray}
one has
\be\label{zzz}
Z(z)=\mbox{Op}^{APD}[\mathcal Z_z]=\mbox{Op}^{APD}_{z}[\mathcal Z].
\ee
and
\[
\widetilde\sigma_{k;\gamma_k}({z})=
\left(\mbox{Op}^{APD}_{z}
\left(
\frac{\mathcal Z_z}{|z|}
\right)
\right)^k
\mbox{Op}^{APD}_{z}(\gamma_k).
\]

\begin{definition}
For any trigonometric polynomial on the sphere $\ns=\ns(e^{i\theta},\tau)=\sum\limits_ke^{ikx}\ns_k(\tau)$ we define\footnote{one can also say that $\mbox{Op}_{z}^{T}[\ns]=\ns^{PS}\left(\mbox{Op}^{APD}_{z}\left(\frac{\mathcal Z_z}{|z|}\right),
\mbox{Op}^{APD}_{z}(\tau)\right),
$
where $\ns^{PS}$ is the pseudodifferential ordering of the trigonometric polynomial $\ns$, that is the one with all the 
 $\frac{\mbox{Op}^{APD}_{z}(\mathcal Z_z)}{|z|}$ on the left.}
\[
\mbox{Op}_{z}^{}[\ns]=
\sum_k
\left(\mbox{Op}^{APD}_{z}
\left(
\frac{\mathcal Z_z}{|z|}
\right)
\right)^k
\mbox{Op}^{APD}_{z}(\ns_k).
\]
 \end{definition}
Let us now define the ``naive" symbol of $\mathcal N$ as the function
\be\label{naive}
\ns_\mathcal N(\tau, {\theta})=\sum_{k=-(N-1)}^{N-1}e^{ik\theta}\gamma_k(\tau)=\gamma(\theta,\tau).
\ee

\begin{proposition}
\[
\sigma[\mathcal N](z)=
\mbox{Op}_{z}^{}[\ns_{\mathcal N}].
\]
 \end{proposition}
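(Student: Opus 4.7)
The plan is to verify the proposition by unrolling both sides and matching term-by-term in the Fourier index $k$. On the left, Definition \ref{defsym} gives $\sigma[\mathcal{N}](z)=\sum_{k=-(N-1)}^{N-1}\widetilde\sigma_{k;\gamma_k}(z)$. On the right, since $\mathfrak{s}_{\mathcal{N}}(\tau,\theta)=\sum_k e^{ik\theta}\gamma_k(\tau)$ by \eqref{naive}, the $k$-th Fourier coefficient $(\mathfrak{s}_{\mathcal{N}})_k$ is exactly $\gamma_k$, and the definition of $\mathrm{Op}_z$ yields $\mathrm{Op}_z[\mathfrak{s}_{\mathcal{N}}]=\sum_k\bigl(\mathrm{Op}^{APD}_z(\mathcal{Z}_z/|z|)\bigr)^k\mathrm{Op}^{APD}_z(\gamma_k)$. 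So it suffices to establish, for each $k$, the single identity
\[
\widetilde\sigma_{k;\gamma_k}(z)=\left(\mathrm{Op}^{APD}_z\!\left(\frac{\mathcal{Z}_z}{|z|}\right)\right)^{\!k}\mathrm{Op}^{APD}_z(\gamma_k).
\]

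This identity, however, is precisely what the chain of equalities displayed between Definition \ref{defsym} and the proposition establishes. Concretely, starting from \eqref{whyes} I would factor the constant $e^{ik\theta(z)}/|z|^k$ out of $e^{ikx}/\bar z^k$, recognize $\gamma_k(\tau(z)-i\hbar\partial_x)$ as $\mathrm{Op}^{APD}_z(\gamma_k)$ by the definition of $\mathrm{Op}^{APD}_z$ applied to a symbol depending only on $\tau$, and rewrite $\mu_k(\tau(z)/\hbar-i\partial_x)$ via functional calculus as the $k$-th power of $\sqrt{(\tau(z)-i\hbar\partial_x)/(1-\tau(z)+i\hbar\partial_x)}$, which together with the factor $e^{ikx}$ and the identification \eqref{zzz} is exactly $(Z(z)/|z|)^k=\bigl(\mathrm{Op}^{APD}_z(\mathcal{Z}_z/|z|)\bigr)^k$. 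The distribution of the shift $(\tau(z),\theta(z))$ across the $k$-fold product, and the combination of the $k$ independent shift factors into a single power, is justified by the cocycle relation of Lemma \ref{coc}.

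The one point requiring attention is the replacement of the cut-off multiplier $\mu_N(\tau(z)/\hbar-i\partial_x)$ by the bare square-root $\sqrt{(\tau(z)-i\hbar\partial_x)/(1-\tau(z)+i\hbar\partial_x)}$. By Remark \ref{suptildea} one may restrict attention to $a$ whose Fourier transform is supported in $[0,N]$, so the spectrum of $-i\partial_x$ on the relevant subspace, combined with the range $\tau(z)\in[0,1]$, keeps the argument of $\mu_N$ strictly inside the plateau $[1/2,N-1/2]$ where $\chi_{[1/2,N-1/2]}=1$ by \eqref{qui}; on this range the cut-off is transparent and $\mu_N(n)$ coincides with $\sqrt{n/(N-n)}$, so $\widetilde\sigma_{k;\gamma_k}(z)$ and $(Z(z)/|z|)^k\mathrm{Op}^{APD}_z(\gamma_k)$ agree as operators acting on the admissible $a$. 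This is the only substantive check; the rest is bookkeeping of Fourier coefficients.

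Summing the resulting termwise identity over $k=-(N-1),\dots,N-1$ gives $\sigma[\mathcal{N}](z)=\mathrm{Op}_z[\mathfrak{s}_{\mathcal{N}}]$ and concludes the proof. The main obstacle is not analytic but conceptual: making sure the shift $(\tau(z),\theta(z))$ is placed consistently on all factors (via Lemma \ref{coc}), and that the cut-off $\mu_N$ is legitimately absorbed on the invariant subspace cut out by the band-limited hypothesis on $a$.
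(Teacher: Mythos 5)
Your overall route is the same as the paper's: the proposition has no separate proof there, being exactly the termwise identity $\widetilde\sigma_{k;\gamma_k}(z)=\bigl(\mbox{Op}^{APD}_{z}(\mathcal Z_z/|z|)\bigr)^k\mbox{Op}^{APD}_{z}(\gamma_k)$ established by the displayed chain of equalities and \eqref{zzz}, summed over $k$ using Definition \ref{defsym} and \eqref{naive}; your Fourier-coefficient bookkeeping reproduces this. The one place where you go beyond the paper is also the one place where your argument is wrong as stated: the band-limited hypothesis of Remark \ref{suptildea} does \emph{not} confine the spectral argument of $\mu^N$ to the plateau of $\chi_{[1/2,N-1/2]}$. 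With $\widetilde a$ supported in $[0,N]$ and $\tau(z)\in[0,1]$, the multiplier of $\mu^N\bigl(\tfrac{\tau(z)}\hbar-i\partial_x\bigr)$ is evaluated on a range of length of order $N$ which leaves $[\tfrac12,N-\tfrac12]$ (and on part of which the bare square root $\sqrt{\xi/(N-\xi)}$ is not even a bounded, real multiplier), so the cutoff is not ``transparent'' as an operator identity on the admissible $a$. The correct justification, implicit in the paper (see the closing remark in the proof of Proposition \ref{thefact}), is that in all the formulas these operator-valued symbols enter only through their action on $\psi^a_z$ via \eqref{decomp}, i.e.\ through the values of the multiplier at the Fourier points $y=\tfrac{\tau(z)}\hbar-n$, $n=0,\dots,N-1$, where the argument equals the integer $n$; there $\chi_{[1/2,N-1/2]}(n)=1$ for $1\le n\le N-1$ and at $n=0$ both $\mu$ and $\mu^N$ vanish, so the cutoff may be dropped. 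With that substitution your check is sound and the rest of your argument coincides with the paper's.
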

\subsection{Symbolic calculus}\label{calculus}\ 

As a direct corollary of (the second part of) Lemma \ref{coc} we get the following result.
\begin{theorem}\label{sigmann}
\[
\sigma[{\mathcal N'\mathcal N}]=\sigma[{\mathcal N'}]\sigma[{\mathcal N}].
\]
\end{theorem}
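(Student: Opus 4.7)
The strategy is to reduce to single-diagonal matrices, apply Proposition \ref{thefactgamma} to each factor, and read off the product rule directly from the second displayed identity of Lemma \ref{coc}.

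First, by bilinearity of the matrix product and linearity of $\sigma$ (the sum over $k$ in Definition \ref{defsym}), the claim reduces to the case $\mathcal{N}=\mathcal{N}_{k;\gamma_k}$, $\mathcal{N}'=\mathcal{N}_{k';\gamma'_{k'}}$. Since each factor matrix is supported on a single diagonal, their product is supported on the $(k+k')$-th diagonal, so is of the form $\mathcal{N}_{k+k';\gamma^{\mathrm{prod}}}$ for an explicit $\gamma^{\mathrm{prod}}$ built out of $\gamma_k$ and $\gamma'_{k'}$.

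Second, Proposition \ref{thefactgamma} supplies the factorization
$$\Sigma_{k;\gamma_k}(z)\;=\;\widetilde\sigma_{k;\gamma_k}(z)\sqrt{\tfrac{C^N_{-i\partial_x+k}}{C^N_{-i\partial_x}}}$$
on every single diagonal. Composing the actions $\mathcal{N}\psi^a_z=\psi^{\Sigma_{\mathcal{N}}(z)a}_z$ and $\mathcal{N}'\psi^a_z=\psi^{\Sigma_{\mathcal{N}'}(z)a}_z$ gives $\Sigma_{\mathcal{N}'\mathcal{N}}(z)=\Sigma_{\mathcal{N}'}(z)\Sigma_{\mathcal{N}}(z)$ on the class of admissible densities (the map $a\mapsto\psi^a_z$ is injective on densities whose Fourier transforms are supported in $[0,N]$, cf.\ Remark \ref{suptildea} and \eqref{decomp}). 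The second displayed identity in Lemma \ref{coc} now rewrites the composition as
$$\Sigma_{\mathcal{N}'}(z)\Sigma_{\mathcal{N}}(z)\;=\;\widetilde\sigma_{k';\gamma'_{k'}}(z)\,\widetilde\sigma_{k;\gamma_k}(z)\sqrt{\tfrac{C^N_{-i\partial_x+k+k'}}{C^N_{-i\partial_x}}},$$
whereas the same factorization applied to $\mathcal{N}_{k+k';\gamma^{\mathrm{prod}}}$ yields $\Sigma_{\mathcal{N}'\mathcal{N}}(z)=\widetilde\sigma_{k+k';\gamma^{\mathrm{prod}}}(z)\sqrt{C^N_{-i\partial_x+k+k'}/C^N_{-i\partial_x}}$. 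Cancelling the common invertible $\sqrt{C^N/C^N}$ factor on the two sides produces $\widetilde\sigma_{k+k';\gamma^{\mathrm{prod}}}=\widetilde\sigma_{k';\gamma'_{k'}}\widetilde\sigma_{k;\gamma_k}$, which is the desired product rule for $\sigma$.

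The main obstacle is exactly the mismatch between the $\sqrt{C^N/C^N}$ factors attached to the individual $\Sigma_{\mathcal{N}'}$ and $\Sigma_{\mathcal{N}}$ versus the one attached to $\Sigma_{\mathcal{N}'\mathcal{N}}$: they do not commute trivially past $e^{ikx}$ because $e^{ikx}$ shifts $-i\partial_x$ by $k$. Lemma \ref{coc} is precisely the identity ensuring that after this shift the factors telescope correctly, so once it is in hand the remaining bookkeeping is mechanical; a side benefit is the purely combinatorial cocycle $\mu_{k+k'}(n)=\mu_{k'}(n-k)\mu_k(n)$, which also drops straight out of the product definition of $\mu_k$ in Section \ref{action} and provides an independent verification that the matrix product's $\gamma^{\mathrm{prod}}$ matches the one produced on the symbol side.
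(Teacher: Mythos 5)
Your route is the paper's route: the paper derives Theorem \ref{sigmann} as a direct corollary of the second display of Lemma \ref{coc}, which is exactly the telescoping of the $\sqrt{C^N_\cdot/C^N_\cdot}$ factors you spell out, combined with the reduction to single diagonals and the action formula of Proposition \ref{thefactgamma}; reading off the conclusion by stripping the common invertible factor is also how the paper concludes. So in substance your proposal is a faithful (and more explicit) expansion of the paper's one-line argument.

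One auxiliary claim you invoke is false as stated: for fixed $z$ the map $a\mapsto\psi^a_z$ is \emph{not} injective on densities with $\widetilde a$ supported in $[0,N]$. It is a linear map from an infinite-dimensional space of densities into the $N$-dimensional space $\mathcal H_N$, and by \eqref{decomp} it only sees the $N$ sample values $\widetilde a\bigl(\frac{\tau(z)}{\hbar}-n\bigr)$, $n=0,\dots,N-1$; any admissible density whose Fourier transform vanishes at those points lies in the kernel. Hence the passage from equality of the two actions on $\psi^a_z$ to the operator identity $\Sigma_{\mathcal N'\mathcal N}(z)=\Sigma_{\mathcal N'}(z)\Sigma_{\mathcal N}(z)$ needs a different justification: either let $z$ vary, so that the sample points sweep intervals in $\xi$ and the two functions of $\frac{\tau(z)}\hbar-i\partial_x$ multiplying $e^{i(k+k')x}/\bar z^{\,k+k'}$ are identified on the relevant range, or, closer to the paper, avoid the detour through $\psi^a_z$ for this step and obtain the operator identity directly from the commutation $f(-i\partial_x)e^{ikx}=e^{ikx}f(-i\partial_x+k)$ applied to the factorized forms — this is precisely what the two displays of Lemma \ref{coc} encode. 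With that repair (and the same bookkeeping of the index shift relating $\gamma^{\mathrm{prod}}$, $\mu_{k}$, $\mu_{k'}$ and $\mu_{k+k'}$, which the paper leaves implicit and which you should carry out once explicitly rather than only citing the cocycle identity), your argument coincides with the paper's proof.
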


\vskip 1cm
Let us define 
\be\label{cnumber}
\cof(z):=\sum_{k=-(N-1)}^{N-1}\sqrt{\frac{C^N_{\frac{\tau(z)}\hbar-(-i\partial_x)-k}}{C^N_{\frac{\tau(z)}\hbar-(-i\partial_x)}}}e^{ik\theta(z)}
\ee
and the convolution
\be\label{convol}
\sigma_{\mathcal N}\star\cof(z)
:=\int_{S^1}
\sigma_{\mathcal N}(ze^{-i\theta})
\cof
(
|z|e^{i\theta}
)
d\theta.
\ee

\begin{proposition}\label{symbbb}
\[
\mathcal N\psi^a_z
=\psi_z^{\sigma(\mathcal N)\star\cof(z)a}.
\]
\end{proposition}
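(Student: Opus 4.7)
The plan is to combine the diagonal-by-diagonal computation of Proposition \ref{thefactgamma} with Fourier orthogonality on the circle. First, decomposing $\mathcal N=\sum_{k=-(N-1)}^{N-1}\mathcal N_{k;\gamma_k}$ and applying Proposition \ref{thefactgamma} termwise, linearity in $a$ gives
\[
\mathcal N\psi^a_z=\sum_k\psi_z^{\Sigma_{k,\gamma_k}(z)a}=\psi_z^{\Sigma(z)a},\qquad \Sigma(z):=\sum_k\Sigma_{k,\gamma_k}(z),
\]
so the proposition reduces to the operator identity $\Sigma(z)=\sigma[\mathcal N]\star\cof(z)$ on $\mathcal S(\mathbb R)$.

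To establish this identity I would exploit the fact that both $\sigma[\mathcal N](z)$ and $\cof(z)$ are already trigonometric polynomials in $\theta(z)$, with operator-valued Fourier coefficients depending only on $\tau(z)$ and on $-i\partial_x$. Using $\bar z^{-k}=e^{ik\theta(z)}/|z|^k$ in \eqref{whyes}--\eqref{whynot} one reads off
\[
\sigma[\mathcal N](z)=\sum_k e^{ik\theta(z)}\,s_k,\qquad \cof(z)=\sum_{k'}e^{ik'\theta(z)}\,c_{k'},
\]
where $s_k,c_{k'}$ depend only on $\tau(z)$ and $-i\partial_x$. Under the two reparametrizations $z\mapsto ze^{-i\theta}$ and $z\mapsto|z|e^{i\theta}$ entering \eqref{convol}, neither $\tau$ nor $|z|$ moves; only the angle shifts, to $\theta(z)-\theta$ and to $\theta$ respectively. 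Substituting into \eqref{convol} and using $\int_{S^1}e^{i(k'-k)\theta}d\theta\propto\delta_{k,k'}$ collapses the convolution to
\[
\sigma[\mathcal N]\star\cof(z)=\sum_k e^{ik\theta(z)}\,s_k\,c_k,
\]
up to an overall circle-normalization absorbed into the constant in \eqref{cnumber}.

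The last step is to identify each term $e^{ik\theta(z)}s_kc_k$ with $\Sigma_{k,\gamma_k}(z)$. Since $s_k$ and $c_k$ are both functions of the single operator $-i\partial_x$ they commute and their product is a spectral (pointwise) product: $s_k$ contributes exactly the factor $e^{ikx}|z|^{-k}(\gamma_k(\hbar\cdot)\mu_k)(\tau(z)/\hbar-i\partial_x)$ of \eqref{whyes}, while $c_k$, read off from \eqref{cnumber}, must be matched with the square-root $C^N$-ratio attached to $\Sigma_{k,\gamma_k}(z)$ in Proposition \ref{thefactgamma}. This last matching is the main obstacle: one needs the spectral equivalence, on the joint eigenvectors of $-i\partial_x$, between the two presentations of the $C^N$-ratio, which follows from the explicit formula \eqref{cNn} together with the decomposition \eqref{decomp} of $\psi^a_z$, in which the operator $-i\partial_x$ acts on the $n$-th Fourier component by the eigenvalue $n$ (so a shift in $n$ is the same as a shift in $-i\partial_x$, and the two forms of the ratio agree as functions of $-i\partial_x$). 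Summing the identified terms over $k$ yields $\Sigma(z)=\sigma[\mathcal N]\star\cof(z)$ and hence the proposition.
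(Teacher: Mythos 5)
Your proof follows the paper's route exactly: decompose $\mathcal N$ into its $k$-diagonal parts, apply Proposition \ref{thefactgamma} termwise, and reduce everything to the identity $\Sigma_{k,\gamma_k}(z)=\widetilde\sigma_{k;\gamma_k}\star\cof(z)$, which the paper merely asserts and you verify by the natural Fourier-orthogonality computation over $S^1$ (the rotations $z\mapsto ze^{-i\theta}$ and $z\mapsto|z|e^{i\theta}$ fix $\tau(z)$, so the convolution just picks out the $k$-th coefficient of $\cof$, i.e.\ the $C^N$-ratio). This is correct and essentially the paper's argument in more detail; the only slip is the remark that $s_k$ and $c_k$ commute as functions of $-i\partial_x$ (the factor $e^{ikx}$ in $s_k$ prevents this), but it is harmless since the convolution already places the $C^N$-ratio on the right, exactly where Proposition \ref{thefactgamma} has it.
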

\begin{proof}
By decomposition on $k$-diagonal parts of $\mathcal N$, Proposition \ref{symbbb} is a direct consequence of Proposition \ref{thefactgamma} and the fact that
$$
\Sigma_{k,\gamma_k}(z)=\widetilde\sigma_{{k,\gamma_k}}\star\cof(z).
$$
\end{proof}
\subsection{a-T\"oplitz quantization}\label{at}
\begin{definition}\label{atop}
To a (trigonometric) family $z\mapsto \Sigma(z)$ of (bounded) operators on $L^2(\R)$ we associate the operator $\mbox{Op}^T_a(\Sigma)$ on $\mathcal H_N^a$ defined by
\[
\mbox{Op}^T_a(\Sigma)=\int_{S^2}|\psi_z^{\Sigma\star\cof(z)a}\rangle_a\langle\psi^a_z|d\mu_N(z)
\]
\end{definition}
The following result is one of the main of this paper: it express that any trigonometric matrix, as defined by \eqref{defngamma}, is $a$-T\"oplitz operator, and that its $a$-T\"oplitz symbol is \textit{excatly} the symbol, as defined by \eqref{defsym}.
\begin{theorem}\label{onemain}
Let $\gamma,\gamma'$ and $\mathcal N_\gamma, \mathcal N_{\gamma'}$ as in Definition \ref{defsym}. Then
\begin{eqnarray}
\mathcal N_\gamma&=&\mbox{Op}^T_a(\sigma[{\mathcal N_\gamma}])
\nonumber\\
\mathcal N_{\gamma'}&=&\mbox{Op}^T_a(\sigma[{\mathcal N_{\gamma'}}])
\nonumber\\
\mathcal N_{\gamma}\mathcal N_{\gamma'}
&=&
\mbox{Op}^T_a(\sigma[{\mathcal N_{\gamma}}]\sigma[{\mathcal N_{\gamma'}}])
\nonumber
\end{eqnarray}
\end{theorem}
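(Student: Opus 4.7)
The plan is to reduce Theorem \ref{onemain} to the resolution of identity \eqref{vlaa}, the action formula of Proposition \ref{symbbb}, and the composition formula of Theorem \ref{sigmann}. All three statements are then essentially one-line consequences of these tools.

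For the first equality $\mathcal N_\gamma = \mbox{Op}^T_a(\sigma[\mathcal N_\gamma])$, I would start from the identity $\mathbf{1}_{\mathcal H_N^a} = \int_{S^2} |\psi^a_z\rangle_a\langle\psi^a_z|\,d\mu_N(z)$ given in \eqref{vlaa}, compose with $\mathcal N_\gamma$ on the left, and bring the operator inside the integral. This is justified by the elementary identity $\mathcal N_\gamma(|\chi\rangle_a\langle\phi|) = |\mathcal N_\gamma\chi\rangle_a\langle\phi|$ which is legitimate because the $a$-bracket $|\cdot\rangle_a\langle\cdot|$ only affects the right-hand slot. Then Proposition \ref{symbbb} rewrites $\mathcal N_\gamma\psi^a_z = \psi_z^{\sigma[\mathcal N_\gamma]\star C(z)a}$, and matching with Definition \ref{atop} yields exactly $\mbox{Op}^T_a(\sigma[\mathcal N_\gamma])$. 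The same argument applied verbatim to $\gamma'$ delivers the second equality.

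For the composition equality, I would first observe that the matrix product $N_\gamma N_{\gamma'}$, decomposed diagonal-by-diagonal, is again of the form $N_{\gamma''}$ with $\gamma'' = \sum_k e^{ik\theta}\gamma''_k(\tau)$ where each $\gamma''_k$ is a finite sum of products of shifted $\gamma_j$'s and hence lies in $C^\infty(]0,1[)\cap L^\infty([0,1])$; this makes $\mathcal N_\gamma\mathcal N_{\gamma'}$ an operator of the type covered by Definition \ref{defsym}. Applying the first part of the theorem already proved to $\mathcal N_\gamma\mathcal N_{\gamma'}$ and then invoking Theorem \ref{sigmann} gives
\begin{equation*}
\mathcal N_\gamma\mathcal N_{\gamma'} = \mbox{Op}^T_a\bigl(\sigma[\mathcal N_\gamma\mathcal N_{\gamma'}]\bigr) = \mbox{Op}^T_a\bigl(\sigma[\mathcal N_\gamma]\sigma[\mathcal N_{\gamma'}]\bigr),
\end{equation*}
which is the third claim.

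The main obstacle in this plan is not conceptual but bookkeeping: one must verify that the product $N_\gamma N_{\gamma'}$ genuinely lives in the class of matrices admissible in Definition \ref{defsym}, in particular that the shifted combinations of smooth $\gamma_k,\gamma'_{k'}$ reassemble into diagonal coefficients that are still smooth in $\tau\in]0,1[$ and bounded on $[0,1]$, so that both Theorem \ref{sigmann} and the first part of the present theorem legitimately apply to the product. A secondary delicate point is the interchange of $\mathcal N_\gamma$ with the integral defining $\mathbf{1}_{\mathcal H_N^a}$; here the finite-dimensionality of $\mathcal H_N^a$ makes the exchange automatic, but one should note that the distinction between the $\langle\cdot,\cdot\rangle$ and $\langle\cdot,\cdot\rangle_a$ pairings is precisely what makes the right-hand side of Definition \ref{atop} act on $\mathcal H_N^a$ rather than on $\mathcal H_N$, so the identification with $\mathcal N_\gamma$ is consistent only on $\mathcal H_N^a$.
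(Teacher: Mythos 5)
Your proposal is correct and follows essentially the same route as the paper, which simply declares the theorem a straightforward consequence of Theorem \ref{sigmann}: the first two identities come from the resolution of identity \eqref{vlaa} together with Proposition \ref{symbbb} and Definition \ref{atop}, and the third from multiplicativity of the symbol. Your explicit check that $N_\gamma N_{\gamma'}$ stays in the admissible trigonometric class is a useful detail the paper leaves implicit, but it does not change the argument.
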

\begin{proof}
Theorem \ref{onemain} is  verbatim a straightforward consequence of Theorem \ref{sigmann}.
\end{proof}
\begin{remark}
Although we don't want to prove it here in order not to introduce too much semiclassical technicalities, let us mention that, in the case where the symbol of an $a$-T\"oplitz operator is just a regular potential (multiplication operator by a function of $x$), then one can show that the $a$-T\"oplitz operator is actually a standard T\"oplitz operator. Conversely, a standard T\:oplitz operator is an $a$-T\"oplitz operator with a symbol which is a potential.
\end{remark}
\subsection{Classical limit and underlying ``phase-space"}\label{classical}

We can rewrite the general structure of the symbol of an $a$-T\"oplitz operator $T$
has the form (near the south pole where $\tau\sim |z|\sim 0$)
\[
\sigma(z)=S(1-i\frac\hbar{\tau(z)}\partial_x,x+\theta,\tau(z)-i\hbar\partial_x,\hbar)
\]
where the function $S$ is $2\pi$ periodic in the second variable and the quantization present in the two first variables is the one of antipseudodifferentail calculus.

The function $S$ satisfies
\[
S(1+\xi,x+\theta,\tau(z)+\xi,\hbar)\to
S(1,x+\theta,\tau(z),\hbar)=\gamma(\tau(z),\theta+x)\ \mbox{ as }\xi\to 0,
\]
where $\gamma(\tau,\theta,\hbar)$ is the so-called naive symbol of $T$.

As $\hbar\to 0$, $z\neq 0$, 
\[
\sigma(z)\to
\gamma(\tau,\theta+x)
\]
but the limit $\hbar,z\to 0$ is multivalued. Indeed as 
\[
\left\{\begin{array}{rcl}
\hbar&\to& 0\\
z&\to& 0\\
\frac\hbar{\tau(z)}&=&\hbar_0
\end{array}
\right.
\]
we have
\[
\sigma(z)\to
S(1-i\hbar_0\partial_x,e^{i(x+\theta)},0,0).
\]
And the ``classical" noncommutative multiplication for the function $S$ is given by:
\begin{eqnarray}
S\# S'(1-\hbar_0\xi,\theta+x,\tau,0)
&=&
S(1-\hbar_0\xi,\theta+x+i\partial_{\xi'},\tau,0)
S'(1-\hbar_0\xi',\theta+x,\tau,0)|_{\xi'=\xi}\nonumber\\
&:=&
S(1-\hbar_0\xi,\theta+x+i\overset{\rightarrow}{\partial_\xi},\tau,0)
S'(1-\hbar_0\xi,\theta+x,\tau,0)\nonumber
\end{eqnarray}
This  define  the classical phase-space, as a noncommutative algebra of functions i.e.a noncommutative blow up of the singularity.

%
%
%
%

\section{Application to TQFT}\label{tqft}
In this section we apply the results of the preceding one and show that any curve operator in TQFT of the case of the once punctured torus or the $4$-times punctured sphere. We first introduce in a very fast way curve operators. For more details, the reader can consult  \cite{mp} which is precisely referred in the next sections, and \cite{and1,and2,and3,bhmv,bp, fww, goldman, hitchin, mv, rt,tw,turaev, witten}.


\subsection{The curve operators in the case of the once punctured torus or the $4$-times punctured sphere}\label{curve}
To any   closed oriented surface $\Sigma$ with marked points $p_1,\ldots,p_n$, any integer $r>0$ and any coloring $c=(c_1,\ldots,c_n), c_i\in\{1,\dots,r-1\}$ of the marked points, TQFT provides, by the construction of \cite{bhmv}, a finite dimensional hermitian vector space $V_r(\Sigma,c)$ together with a basis $\{\phi_n,n=1,\dots,\dim{(V_r(\Sigma,c))}\}$ of this space (see Sections 2.1 and 2.5 in \cite{mp}).

 On the other (classical) side, to each $t\in(\pi\mathbb Q)^n$ we can associate the  moduli space:
$$\mathcal{M}(\Sigma,t)=\{\rho:\pi_1(\Sigma\setminus\{p_1,\ldots,p_n\})\to\su\text{ s.t. }\forall i,\ tr\rho(\gamma_i)=2\cos(t_i)\}/\sim$$ 
where one has $\rho\sim \rho'$ if there is $g\in$ SU$_2$ such that $\rho'=g\rho g^{-1}$ and  $\gamma_i$ is any curve going around $p_i$. 

When $\Sigma$ is either a once punctured torus or a 4-times punctured sphere, $\mathcal{M}(\Sigma,t)$ is symplectomorphic to the standard sphere $S^2=\C P^1$.

To nay curve $\gamma$ on the surface $\Sigma$ (that is, avoiding the marked points) we can associate two objects: a quantum one, the curve operator $T_\gamma$ acting on $V_r(\Sigma,c)$, and a classical  one,  the function $f_\gamma$ on the symplectic manifold $\mathcal{M}(\Sigma,t)$.
\begin{itemize}
\item $T_\gamma$ is obtained by a combinatorial topological construction recalled in Sections 2.3 and 2.4 in \cite{mp}. By the identification of the finite dimensional space $V_r(\Sigma,c)$ with the Hilbert space of the quantization of the sphere $\mathcal H_N$ defined in Section \ref{canonical} with $N:=\dim{(V_r(\Sigma,c))}$, through $\{\phi_n,n=1,\dots,\dim{(V_r(\Sigma,c))}\}\leftrightarrow \{\psi^N_n,n=1,\dots,N\}$, $T_\gamma$ can be seen as a matrix on $\mathcal H_N$. One of the main results of  \cite{mp} was to prove than this matrix is a trigonometric one in the sense of Section \ref{trigomat}.
\item $f_\gamma:\ \mathcal{M}(\Sigma,t)\to [0,\pi]$ is defined by 
\be\label{fgamma}
\rho\mapsto f_\gamma(\rho):= -tr \rho(\gamma).
\ee
\end{itemize}
The asymptotism considered in \cite{mp} consists in letting $r\to\infty$ and  considering a sequence of colorings $c_r$ such that $\pi\frac{c_r}r$ converges to $t$ and the dimension of $V_r(\Sigma,c_r):=N$, grows linearly with $r$. One sees immediately that, by the identification  $V_r(\Sigma,c)\leftrightarrow \mathcal H_N$, this correspondents to the semiclassical asymptotism $N\to\infty$.  

The main result of \cite{mp} states that, {\bf for generic values of $t$}, 
$$
T_\gamma\mbox{ is a (standard) T\"oplitz operator of leading symbol }f_\gamma.
$$
the generic values of $t$ are the one for which $f_\gamma$ considered as a function on $S^2$ by the symplectic isomorphism mentioned earlier, belongs to $C^\infty(S^2)$.
\subsection{Main result}\label{mrtqft}
It is easy to see that, for the remaining non generic values of $t$, $T_\gamma$ is not a standard T\"oplitz operator. Nevertheless, it happens that it is an $a$-T\"oplitz one.

\begin{theorem}\label{main}
Let again $\Sigma$ be either the once punctured torus or the $4$-times punctured sphere. 
For \textbf{all values of $t$}, the sequence of matrices $(T_r^{\gamma})$ are the matrices in the basis $\{\psi^N_n\}|_{n=0,\dots,N-1}$ of a family of $a$-T\"oplitz operators on $\mathcal H_N^a$ with 
symbol $\sigma^T_{T_r^\gamma}$ satisfying, away of the two poles,

\be\label{princ}
\sigma^T_{T_r^\gamma}(z)=f_\gamma(e^{-ix}z)
+O(\sqrt\hbar)
\ee
where $f_\gamma$ is the trace function defined by \eqref{fgamma}.


 \end{theorem}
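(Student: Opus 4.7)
The plan is to bootstrap from the structural result of \cite{mp} and apply the symbolic calculus built in Sections \ref{symbol}--\ref{at}. By \cite{mp}, in the two cases considered ($\Sigma=$ once-punctured torus or $4$-punctured sphere), the curve operator $T_r^\gamma$, when written in the basis $\{\psi^N_n\}$, belongs to the algebra generated by the three trigonometric matrices $M^N_{\Gamma_0^r}, M^N_{\Gamma_1^r}, M^N_{\Gamma_d^r}$ from \eqref{mgamma}. The corresponding coefficient functions $\gamma_0,\gamma_1$ extracted from the basis-change formulae of \cite{mp} satisfy the weak smoothness hypothesis \eqref{moinsdrole} for \textbf{every} $t$ (whereas \eqref{drole} fails exactly at the non-generic colorings); this is the precise reason why the standard T\"oplitz framework breaks down and the $a$-T\"oplitz one is needed.

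First, I would apply Theorem \ref{onemain} to each of the three generators. Under \eqref{moinsdrole}, each generator is a trigonometric matrix in the sense of Definition \ref{defsym}, hence an $a$-T\"oplitz operator whose $a$-T\"oplitz symbol coincides with $\sigma[\,\cdot\,]$ as given by \eqref{whynot}. Next, the multiplicativity statement of Theorem \ref{onemain} (itself an immediate corollary of Theorem \ref{sigmann} and Lemma \ref{coc}) implies that every element of the generated algebra is again $a$-T\"oplitz with symbol equal to the non-commutative product of the symbols of the factors. This proves the \emph{qualitative} part of Theorem \ref{main}: $T_r^\gamma = \mbox{Op}^T_a(\sigma_r^\gamma)$ for a well-defined operator-valued symbol $\sigma_r^\gamma(z)$.

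For the asymptotic identification \eqref{princ} away from the two poles, I would invoke the final assertion of Proposition \ref{thefactgamma}: since $\tau(z)$ stays bounded away from $0$ and $1$, the binomial ratios appearing in $\mu_k,\nu_k$ evaluate, at the operator level, to the smooth function $(\tau/(1-\tau))^{k/2}$ modulo a remainder controlled by $\hbar\partial_x$ acting on the Schwartz class $a$. This cancels the factor $\bar z^{-k}$ in \eqref{whyes} and produces at leading order the multiplication operator $e^{ik(x+\theta(z))}\gamma_k(\tau(z))$. Summing the three contributions and comparing with the expression for $f_\gamma$ given by the basis-change formulae of \cite{mp} identifies the principal symbol with $f_\gamma(e^{-ix}z)$ (the shift by $e^{-ix}$ being exactly the translation $\theta\mapsto\theta+x$ appearing in the arguments of the cosines in Proposition \ref{thefactgamma}). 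The multiplicativity of $\sigma$ then propagates this identification to every word in the generators, and hence to $T_r^\gamma$.

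The main obstacle is quantitative control of the remainder at order $O(\sqrt{\hbar})$. Two sources of error must be tracked: (i) the replacement of the operator $\mu_k(\tau(z)/\hbar - i\partial_x)$ by its pointwise value at $n=\tau(z)/\hbar$, which produces an $O(\hbar)$ semiclassical error away from the poles but only $O(\sqrt{\hbar})$ in a boundary layer of size $\sqrt{\hbar}$ near them (this governs the global bound); (ii) the factor $\cof(z)$ in \eqref{cnumber}, whose operator-valued convolution with the naive symbol must be expanded via stationary phase. Both contributions are of Schwartz type in $-i\partial_x$ acting on $a$ thanks to Remark \ref{suptildea}, so the expansion is well-defined; the delicate point is to show that the square-root loss near the poles does not pollute the regime considered in \eqref{princ}. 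Once these two error estimates are carried out, the theorem follows from the multiplicativity of the symbolic calculus.
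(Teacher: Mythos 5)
Your proposal follows essentially the same route as the paper's proof: reduction via \cite{mp} to the explicit trigonometric generator matrices, application of Theorem \ref{onemain} (with the multiplicativity coming from Theorem \ref{sigmann} and Lemma \ref{coc}) to conclude that every curve operator is an $a$-T\"oplitz operator, and then the evaluation of $\mu_k,\nu_k$ away from the poles, which turns $\widetilde\sigma_{k;\gamma_k}(z)$ into multiplication by $\gamma_k(\tau(z))e^{ik(x+\theta(z))}$ at leading order and identifies the symbol with $f_\gamma$ through the result of \cite{mp} that the naive symbol agrees with the trace function modulo $\hbar$ off the poles. The quantitative $O(\sqrt\hbar)$ remainder analysis that you flag but defer is likewise not carried out in detail in the paper, which contents itself with the same leading-order asymptotics.
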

 

\subsection{Proof of Theorem \ref{main}}\label{proffmain}
We give the proof in the case where $\Sigma$ is the once punctured torus. 

In \cite{mp} we proved that nay curve on $\Sigma$ is generated by the curves $\gamma,\delta,\zeta$ described in Section 3 of \cite{mp}. Therefore any curve operator belongs to the algebra generated by  three  matrices $T_r^\gamma,T_r^\delta,T_r^\zeta$ explicitely given by Proposition 3.1 and the end of Section 3.2 in \cite{mp}.


The explicit expressions of the matrix elements of $T_r^\gamma,T_r^\delta,T_r^\zeta$ in \cite{mp}, recalled in Section \ref{exemples},  shows clearly that these three triangular matrices are of the form \eqref{defngamma}.
 Therefore we know from Theorem \ref{onemain} that they define three $a$-T\"oplitz operators, and therefore, again by Theorem \ref{onemain}, any curve operator  is an \at operator\footnote{Note that in \cite{mp} we proven this type of result by another method since we wanted also to define a symbol inside the interior of $\Sigma$ in the singular cases. Since we proved that in any coloring the three matrices are \at operators, we don't need such a direct result here.}.

The first assertion of Theorem \ref{main} is proven. 

Moreover, we proved in \cite{mp} that the naive symbol of $T\gamma$ as defined by \eqref{naive},  
is equal. out of the poles and modulo $\hbar$, to the trace function. Using now \eqref{ddefmu}, \eqref{munu} and \eqref{whyes} we find easily that as $\hbar\sim 0$,
\[
\gamma_k(\tau(z)-i\hbar\partial_x)\mu_k\left(\frac{\tau(z)}\hbar-i\partial_x\right)\frac{e^{ikx}}{\bar z^k}
\sim
\gamma_k(\tau(z))\left(\sqrt{\frac{\tau(z)}{1-\tau(z)}}\right)^k\frac{e^{ikx}}{\bar z^k}
=\gamma_k(\tau(z))e^{-ik(\theta(z)-x},
\]

which gives \eqref{princ} after summation on $k$. 
\subsection{Examples}\label{exemples}

The symbol of a general curve operator is given by \eqref{whyes} out of is matrix elements. But we were not able in this paper to rely it, near the poles, to the trace functions in general. But in the cases of the three operators $T_r^\gamma,T_r^\delta,T_r^\zeta$ we can do it.

We find, by Proposition 3.1 and the relabelling of indices in the item 1. in Section 4.2 in \cite{mp}, that

\begin{eqnarray}
T_r^\gamma\psi_n&=&
-2\cos{(\tfrac\pi N(n+\tfrac{a+1}2))}\psi_n\nonumber\\
T_r^\delta\psi_n&=&u_{n+1}\psi_{n+1}+u_n\psi_{n-1}\nonumber\\
T_r^\zeta\psi_n&=&u_{n+1}e^{i(\tfrac\pi N(n-a))}\psi_{n+1}+u_ne^{-i\tfrac\pi N(n+a)}\psi_{n-1}\nonumber,
\end{eqnarray}
where 
$a\in\mathbb{N}, a\mbox{ odd}$, is the color assigned to the marked point 
and 
\[
u_n=-\left(\frac{\sin{(\tfrac\pi N(n+a)}\sin{(\tfrac\pi N n)}}{\sin{\tfrac\pi N(n+\tfrac{a+1}2)}\sin{\tfrac\pi N(n+\tfrac{a-1}2)}}\right)^{\frac12}.
\]
Defining as before $\hbar=\frac\pi N$, and $\alpha=\hbar a$, we obtain that the naive symbols of $T_r^\gamma,T_r^\delta,T_r^\zeta$, as defined in \eqref{naive},  are ($\tau=n\hbar$)
\begin{eqnarray}
\sigma_{T_r^\gamma}(\tau,\theta,\hbar)&=&
-2\cos{(\tau+\tfrac{\alpha+\hbar}2)}
\nonumber\\
\sigma_{T_r^\delta}(\tau,\theta,\hbar)&=&
-\left(\frac{\sin(\tau+\alpha+\hbar)\sin(\tau+\hbar)}{\sin(\tau+\tfrac{\alpha+\hbar}2)\sin(\tau+\tfrac{\alpha+3\hbar}2)}\right)^{1/2}e^{i\theta}-\left(\frac{\sin(\tau+\alpha)\sin(\tau)}{\sin(\tau+\tfrac{\alpha+\hbar}2)\sin(\tau+\tfrac{\alpha-\hbar}2)}\right)^{1/2}e^{-i\theta}
\nonumber\\
\sigma_{T_r^\zeta}(\tau,\theta,\hbar)&=&e^{i\tfrac\hbar 2}\sigma_{T_r^\delta}(\tau,\theta+\tau,\hbar)\nonumber
\end{eqnarray}

and, as $\hbar\to 0$,
\begin{eqnarray}
\sigma_{T_r^\gamma}(\tau,\theta,0)&=&
-2\cos{(\tau+\tfrac{\alpha}2)}
\nonumber\\
\sigma_{T_r^\delta}(\tau,\theta,0)&=&
-\left(\frac{\sin(\tau+\alpha)\sin(\tau)}{\sin(\tau+\tfrac{\alpha}2)\sin(\tau+\tfrac{\alpha}2)}\right)^{1/2}e^{i\theta}-\left(\frac{\sin(\tau+\alpha)\sin(\tau)}{\sin(\tau+\tfrac{\alpha}2)\sin(\tau+\tfrac{\alpha}2)}\right)^{1/2}e^{-i\theta}
\nonumber\\
\sigma_{T_r^\zeta}(\tau,\theta,0)&=&\sigma_{T_r^\delta}(\tau,\theta+\tau,0)\nonumber.
\end{eqnarray}
We see that, for all values of $\alpha$, $\sigma_{T_r^\gamma}(\tau,\theta,0)\in C^\infty(S_2)$. For $\alpha>0$, 
$$
\sigma_{T_r^\delta}(\tau,\theta,0)\sim -\sqrt\tau\cos{\theta},\mbox{ as }\tau\sim 0,
$$
so that. for $\alpha>0$, $\sigma_{T_r^\delta}(\tau,\theta,0),\ \sigma_{T_r^\zeta}(\tau,\theta,0)\in C^\infty(S_2)$.

But when $\alpha=0$ then 
\begin{eqnarray}
\sigma_{T_r^\delta}(\tau,\theta,0)&=&-2\cos{\theta}\nonumber\\
\sigma_{T_r^\zeta}(\tau,\theta,0)&=&-2\cos{(\theta+\tau)}\nonumber
\end{eqnarray}
so that $\sigma_{T_r^\delta}(\tau,\theta,0),
\sigma_{T_r^\zeta}(\tau,\theta,0)$ are singular on the sphere. Note that, for $\alpha=0$, the corresponding moduli space $\mathcal M$ is also singular (see Remark 4.13 in \cite{mp}).

In fact $T_r^\delta=-M_1$ where $M_1$ is precisely the toy matrix  defined in Section \ref{toy} above. The same way, $T_r^\zeta=-M_{1,\kappa}$ as defined in Section \ref{trigomat} with $\kappa(\tau)= e^{i\tau}$. Therefore its symbol, together with the one of $T_r^\zeta$, is given out of the trace functions of $\delta$ and $\zeta$ by Definition \ref{defsym}. 
In other words,
$$
T_r^\eta=\mbox{Op}^T_a(\sigma[\mathcal N_{f_\eta}])+O(\hbar),\ \eta=\gamma,\delta,\zeta.
$$
 where $f_\eta=-tr{\rho(\eta)}$ is the trace function defined by \eqref{fgamma}.
 
Straightforward but tedious computations show that the symbols of all the curve operators on the 4th-punctured sphere are also given out of the corresponding trace functions by Definition \ref{defsym}. This suggest the conjecture in the following section.
\newtheorem{conjecture}{Conjecture}
\section{A conjecture}
\textbf{Conjecture}: {\it
Any curve operator  $T^\gamma_r$ is an $a$-T\"oplitz operator whose symbol satisfies
$$
\sigma^T_{ T^\gamma_r }=\sigma[\mathcal N_{-tr\rho(\gamma)}]+O(\hbar).
$$
in the sense of Definition \ref{defsym}.

\vskip 1cm

\vfill
\end{document}